\NewDocumentCommand{\mref}{m}{\quinn_mref:n {#1}}
\DeclareSymbolFont{rsfs}{U}{rsfs}{m}{n}
\DeclareSymbolFontAlphabet{\mathscrsfs}{rsfs}
\renewcommand*{\@seccntformat}[1]{\csname the#1\endcsname.\hspace{.5em}}
\theoremstyle{plain}
\newtheorem{theorem}{Theorem}
\newtheorem{proposition}{Proposition} 
\newtheorem{note}{Note}
\newtheorem{lemma}{Lemma}  
\newtheorem{example}{Example}
\newtheorem{remark}{Remark}
\newcommand*{\Ovec}{\mathbf{0}} 
\newcommand*{\Dp}{\mathscrsfs{D}} 
\newcommand*{\Dm}{ \overline{\mathscrsfs{D}} } 
\newcommand*{\U}{\mathcal{U}} 
\newcommand*{\A}{\mathcal{A}} 
\newcommand*{\B}{\mathcal{B}} 
\newcommand*{\E}{\mathcal{E}} 
\newcommand*{\EE}{\mathbb{E}} 
\newcommand*{\tE}[1]{\ket{\tilde{E_{#1}}}} 
\newcommand*{\fB}{\mathfrak{B}}
\newcommand*{\etal}{\textit{et al}}
\mathchardef\mhyphen="2D
\newcommand{\mh}{\mhyphen}
\newcommand{\xy}{x\mh y}
\newcommand{\uv}{u\mh v}
\newcommand{\Span}{\text{span}}
\newcommand{\sgn}{\text{sgn}}
\newcommand*{\tr}{\text{tr}}
\newcommand*{\Tr}{\text{Tr}}
\DeclareMathOperator*{\argmax}{argmax} 
\newcommand*{\wt}{\widetilde}
\newcommand*{\Trpart}[1]{\text{Tr}_{\text{#1}}}
\newcommand{\tab}{~~~~~~}
\newcommand{\half}{\frac{1}{2}} 
\newcommand{\invsqrttwo}{\frac{1}{\sqrt{2}}}
\renewcommand{\l}{\lambda}
\renewcommand{\a}{\alpha}
\renewcommand{\b}{\beta}
\newcommand{\m}{\mu}
\newcommand{\n}{\nu}
\newcommand{\El}{E_{\l}}
\newcommand{\tensorprod}{\otimes}
\newcommand{\ket}[1]{| #1 \rangle}
\newcommand{\braket}[2]{\langle #1 | #2 \rangle}
\newcommand{\ketbra}[2]{| #1 \rangle\langle #2 |}
\newcommand{\braopket}[3]{\langle #1 | #2 | #3 \rangle}
\newcommand{\Xlambdax}[2]{B_{#2}\tensorprod\sqrt{\El}~ \ket{#1}} 
\newcommand*{\concave}[1]{\sqrt{{#1}\left(1 - {#1} \right)}}
\newcommand*{\concavesq}[1]{\sqrt{{#1}\left[1 - {#1} \right]}}
\newcommand*{\phiconcave}[1]{\phi{\left[2\concave{{#1}}\right]}} 
\newcommand*{\philog}[1]{\left(1+{#1}\right) \ln \left(1+{#1}\right) + \left(1-{#1}\right) \ln \left(1-{#1}\right)}
\newcommand*{\factor}[1]{\sqrt{\frac{D_{uv}}{1-D_{uv}}}}
\newcommand{\wh}{\widehat}
\newcommand{\xuvExpr}[2]{\invsqrttwo \left(\ket{#1}+\ket{#2}\right)}
\newcommand{\yuvExpr}[2]{\invsqrttwo \left(\ket{#1}-\ket{#2}\right)}
\newcommand*{\entangledExpr}[5]{\sqrt{1-D_{#1}}~\ket{#2}\ket{\xi_{#3}} + \sqrt{D_{#1}}~\ket{#4}\ket{\zeta_{#5}} }
\newcommand*{\entangledExprGen}[5]{\sqrt{#1}~\ket{#2}\ket{\xi_{#3}} + \sqrt{1-{#1}}~\ket{#4}\ket{\zeta_{#5}} }
\newcommand*{\txx}{\ket{x}\ket{x}}
\newcommand*{\tyy}{\ket{y}\ket{y}}
\newcommand*{\txy}{\ket{x}\ket{y}}
\newcommand*{\tyx}{\ket{y}\ket{x}}
\newcommand*{\Phixy}[1]{\ket{\Phi_{xy}^{#1}}}
\newcommand*{\PhixyExpr}[1]{\invsqrttwo\left(\txx{#1}\tyy\right)}
\newcommand*{\Psixy}[1]{\ket{\Psi_{xy}^{#1}}}
\newcommand*{\PsixyExpr}[1]{\invsqrttwo\left(\txy{#1}\tyx\right)}
\newcommand*{\xyscaled}[1]{\left(\cos{#1}\ket{x} + \sin{#1}\ket{y}\right)}  
\newcommand*{\rot}{\mathbf{R}}
\newcommand*{\vegn}{\mathbf{e}}
\newcommand*{\vcan}{\boldsymbol{\varepsilon}}
\newcommand*{\matd}{\mathbf{D}}
\def\spvecA#1;{\if;#1;\else #1\cr \expandafter \spvecA \fi}
\newcommand*{\intAA}{
	\Dp_{uv}~\ket{\E_0} + \Dm_{uv}~\ket{\E_3}
}
\newcommand*{\intAB}{
	\Dm_{uv}~\ket{\E_0} + \Dp_{uv}~\ket{\E_3}
}
\newcommand*{\intAC}{
	\Dm_{uv}~\ket{\E_2} + \Dp_{uv}~\ket{\E_1}
}
\newcommand*{\intAD}{
	\Dp_{uv}~\ket{\E_2} + \Dm_{uv}~\ket{\E_1}
}
\newcommand*{\povmAA}{
	\ket{\E_0}
}
\newcommand*{\povmAB}{
	\ket{\E_1}
}
\newcommand*{\povmAC}{
	\ket{\E_2}
}
\newcommand*{\povmAD}{
	\ket{\E_3}
}
\newcommand*{\intBA}{
	\ket{\E_0}
}
\newcommand*{\intBB}{
	2\Dp\Dm~\ket{\E_0}+\left(\Dp^2-\Dm^2\right)\ket{\E_1}
}
\newcommand*{\intBC}{
	\ket{\E_2}
}
\newcommand*{\intBD}{
	2\Dp\Dm~\ket{\E_2}+\left(\Dp^2-\Dm^2\right)\ket{\E_3}
}
\newcommand*{\povmBA}{
	\Dp\ket{\E_0}-\Dm\ket{\E_1}
}
\newcommand*{\povmBB}{
	\Dm\ket{\E_0}+\Dp\ket{\E_1}
}
\newcommand*{\povmBC}{
	\Dp\ket{\E_2}-\Dm\ket{\E_3}
}
\newcommand*{\povmBD}{
	\Dm\ket{\E_2}+\Dp\ket{\E_3}
}
\newcommand*{\intCA}{
	\sqrt{1-D_{uv}}~\ket{\E_0} - \sqrt{D_{uv}}~\ket{\E_1}
}
\newcommand*{\intCB}{
	\sqrt{1-D_{uv}}~\ket{\E_0} + \sqrt{D_{uv}}~\ket{\E_1}
}
\newcommand*{\intCC}{
	\sqrt{1-D_{uv}}~\ket{\E_2} - \sqrt{D_{uv}}~\ket{\E_3}
}
\newcommand*{\intCD}{
	\sqrt{1-D_{uv}}~\ket{\E_2} + \sqrt{D_{uv}}~\ket{\E_3}
}
\newcommand*{\povmCA}{
	\invsqrttwo\left(\ket{\E_0}-\ket{\E_1}\right)
}
\newcommand*{\povmCB}{
	\invsqrttwo\left(\ket{\E_0}+\ket{\E_1}\right)
}
\newcommand*{\povmCC}{
	\invsqrttwo\left(\ket{\E_2}-\ket{\E_3}\right)
}
\newcommand*{\povmCD}{
	\invsqrttwo\left(\ket{\E_2}+\ket{\E_3}\right)
}
\newcommand*{\intDA}{
	\Dp_{uv}~\ket{E_0} + \Dm_{uv}~\ket{E_1}
}
\newcommand*{\intDB}{
	\Dm_{uv}~\ket{E_0} + \Dp_{uv}~\ket{E_1}
}
\newcommand*{\intDC}{
	\Dp_{uv}~\ket{E_2} + \Dm_{uv}~\ket{E_3}
}
\newcommand*{\intDD}{
	\Dm_{uv}~\ket{E_2} + \Dp_{uv}~\ket{E_3}
}
\newcommand*{\intEA}{
	\left(\Dp_{uv}\sqrt{a}+\Dm_{uv}\sqrt{1-a}\right)\ket{\E_0} + \left(\Dm_{uv}\sqrt{a}-\Dp_{uv}\sqrt{1-a}\right)\ket{\E_1}
}
\newcommand*{\intEB}{
	\left(\Dm_{uv}\sqrt{a}+\Dp_{uv}\sqrt{1-a}\right)\ket{\E_0} + \left(\Dp_{uv}\sqrt{a}-\Dm_{uv}\sqrt{1-a}\right)\ket{\E_1}
}
\newcommand*{\intEC}{
	\left(\Dp_{uv}\sqrt{a}+\Dm_{uv}\sqrt{1-a}\right)\ket{\E_2} + \left(\Dm_{uv}\sqrt{a}-\Dp_{uv}\sqrt{1-a}\right)\ket{\E_3}
}
\newcommand*{\intED}{
	\left(\Dm_{uv}\sqrt{a}+\Dp_{uv}\sqrt{1-a}\right)\ket{\E_2} + \left(\Dp_{uv}\sqrt{a}-\Dm_{uv}\sqrt{1-a}\right)\ket{\E_3}
}
\newcommand*{\povmEA}{
	\sqrt{a}\ket{\E_0}-\sqrt{1-a}\ket{\E_1}
}
\newcommand*{\povmEB}{
	\sqrt{1-a}\ket{\E_0}+\sqrt{a}\ket{\E_1}
}
\newcommand*{\povmEC}{
	\sqrt{a}\ket{\E_2}-\sqrt{1-a}\ket{\E_3}
}
\newcommand*{\povmED}{
	\sqrt{1-a}\ket{\E_2}+\sqrt{a}\ket{\E_3}
}
\tikzstyle{startstop} = [rectangle, rounded corners, minimum width=1cm, minimum height=.5cm,text centered, draw=black, fill=cyan!30]
\tikzstyle{startstop2} = [rectangle, rounded corners, minimum width=1cm, minimum height=.5cm,text centered, text width = 2 cm, draw=black, fill=yellow!30] 
\tikzstyle{startstopflex} = [rectangle, rounded corners, minimum width=1cm, minimum height=.5cm,text centered, draw=black, fill=cyan!30] 
\tikzstyle{temp} = [rectangle, minimum width=1cm, minimum height=.5cm,text centered, draw=black, fill=white!30] 
\tikzstyle{arrow} = [thick,->,>=stealth]
\tikzstyle{arrowDashed} = [dashed,->,>=stealth]
\tikzstyle{line} = [thick]
\tikzstyle{lineDashed} = [dashed,thin]
\tikzstyle{branch} = [circle,inner sep=0pt,minimum size=1mm,fill=black,draw=black]
\tikzstyle{block} = [draw,rectangle,thin,minimum height=2em,minimum width=2em, fill=magenta!10]
\tikzstyle{tinyblock} = [draw,rectangle,thin,minimum height=2em,minimum width=2em]
\tikzstyle{sum} = [draw, fill=blue!20, circle, node distance=1cm]
\newcommand*{\tikzGist}{     
	\begin{tikzpicture}[node distance=2cm] 
	
	\node (intr) [startstop] {Optimal Interaction};	
	
	\node (many) [startstopflex, xshift=-6cm, yshift=-2cm] {
		\begin{tabular}{c}
		infinitely many: \\ 
		$\{\matd\rot\vcan\}_{\rot}$
		\end{tabular}		
	};
	
	\node (unq) [startstopflex, xshift=6cm, yshift=-2cm] {
		\begin{tabular}{c}
		Unique Expression: \\ 
		$\matd\vegn$~~[Eq.~\ref{eq:xi-zeta-I-final}]
		\end{tabular}		
	};
	
	\node (b1) [sum, xshift=1cm, yshift=-2cm] {};
	
	\draw [arrow] (intr) -| 
			node[xshift=2cm, anchor=center, above, text width=2.1cm] {expressed in}  
			node[xshift=2cm, anchor=center, below, text width=2.2cm] {canonical basis} 
	(many); 
		
	\node (bsEgn) [block, xshift=-1.8cm, yshift=-3.4cm] {
		\begin{tabular}{c}
		Optimal POVM: \\ 
		$\vegn = \rot\vcan$
		\end{tabular}		
	}; 
	
	\node (b2) [xshift=-5.4cm, yshift=-3.4cm] {};
	
	\node (gam) [block, xshift=-6cm, yshift=-5cm] {$\Gamma_{xy}$ over $\vcan$};
	
	\node (gamDscr) [block, xshift=3.4cm, yshift=-5cm] {		
		\begin{tabular}{c}
		Unique Expression~[Eq.~\ref{eq:Gxy:inEigenbasis}] over $\vegn$: \\
		\small$\begin{matrix}
		\half(\Dp_{uv}^2-\Dm_{uv}^2)\left[\left(1-D_{xy}\right)\!\left(\EE_{00}-\EE_{11}\right) + D_{xy}\!\left(\EE_{22}-\EE_{33}\right)\right]
		\end{matrix}$ 
		\end{tabular}	
	};	
	
	\node (b3) [sum, xshift=-2.2cm, yshift=-5cm] {};
	
	\draw [arrow] (many) -- node[] {} (b1);	
		
	\draw [arrow] (b1) -- 
		node[anchor=center, above, text width=2.1cm] {expressed in}  
		node[anchor=center, below, text width=2.6cm] {eigenbasis of $\Gamma_{xy}$}
	(unq);
	
	\draw [arrowDashed] (bsEgn) -- node[] {} (b3); 
	\draw [arrowDashed] (gam) -- node[] {} (b3); 
	\draw [arrowDashed] (b3) -- node[] {} (gamDscr); 
	
	\draw [arrowDashed] (many) -- node[] {} (gam); 
	
	\draw [arrowDashed] (gam.north) |- ++(right:0.5) |- node[anchor=center, below, text width=1.5cm, xshift=1.2cm] {eigenbasis} (bsEgn.west);
	
	\draw [arrowDashed] (bsEgn) -| node[]{}(b1); 
	
	\end{tikzpicture}
}
\newcommand*{\tikzUnq}{     
	\begin{tikzpicture}[node distance=2cm] 
		
	\node (many) [block, xshift=3cm, yshift=0cm]  {
		\begin{tabular}{c}
		$\{\matd\rot\vcan\}_{\rot}$
		\end{tabular}
	};
	
	\node (unq) [startstop, left of=many, xshift=-3cm, yshift=1cm] {
		\begin{tabular}{c}
		Optimal Interaction \smallskip\\ 
		\footnotesize{Unique Expression} \\
		$\matd\vegn$~~[Eq.~\ref{eq:xi-zeta-I-final}]
		\end{tabular}
	};
	
	\node (optPOVM) [startstop, left of=many, xshift=-3cm, yshift=-1cm] {
		\begin{tabular}{c}
		Optimal POVM \smallskip\\ 
		$\vegn = \rot\vcan$~~[Eq.~\ref{eq:optimal_POVM-vec}]
		\end{tabular}
	};
	\node (b1) [sum, left of=many, xshift=-0.8cm, yshift=0cm] {};
	
	\draw [arrow] (unq) -- node[]{}  (b1); 
	\draw [arrow] (optPOVM) -- node[]{}  (b1); 
	\draw [arrow] (b1) -- node[]{}  (many); 
	\node (POVM1) [rectangle, draw=black, right of=many, xshift=1cm, yshift=1.5cm] {POVM 1 [Eq.~\ref{eq:optEl}]};
	\node (POVM2) [rectangle, draw=black, right of=many, xshift=1cm, yshift=0.5cm] {POVM 2 [Eq.~\ref{eq:optimal_POVM-vec:eqErr}]};
	\node (POVM3) [rectangle, draw=black, right of=many, xshift=1cm, yshift=-0.5cm] {POVM 3 [Eq.~\ref{eq:optimal_POVM-vec_exmpl}]};
	\node (POVM4) [rectangle, draw=black, right of=many, xshift=1cm, yshift=-1.5cm] {POVM 4 [Eq.~\ref{eq:optimal_POVM-vec-example1}]};
	\draw [arrow] (many) -- (POVM1.west); 
	\draw [arrow] (many) -- (POVM2.west); 
	\draw [arrow] (many) -- (POVM3.west); 
	\draw [arrow] (many) -- (POVM4.west); 
	
	\node (int1) [rectangle, draw=black, right of=POVM1, xshift=2cm] {Interaction 1 [Eq.~\ref{eq:opt_xi-zeta}]};
	\node (int2) [rectangle, draw=black, right of=POVM2, xshift=2cm] {Interaction 2 [Eq.~\ref{eq:choice:xi_zeta:opt}]};
	\node (int3) [rectangle, draw=black, right of=POVM3, xshift=2cm] {Interaction 3 [Eq.~\ref{eq:xi-zeta-opt-gen}]};
	\node (int4) [rectangle, draw=black, right of=POVM4, xshift=2cm] {Interaction 4 [Eq.~\ref{eq:xi-zeta-opt-gen-example1}]};
	\draw [arrow] (POVM1) -- node[] {} (int1); 
	\draw [arrow] (POVM2) -- node[] {} (int2); 
	\draw [arrow] (POVM3) -- node[] {} (int3); 
	\draw [arrow] (POVM4) -- node[] {} (int4); 
	\end{tikzpicture}
}
\begin{document}
\title{Revisiting optimal eavesdropping in quantum cryptography: \\Optimal interaction is unique up to rotation of the underlying basis}
\author{Atanu Acharyya}
\email{pub.academy.15@gmail.com}
\affiliation{Applied Statistics Unit, Indian Statistical Institute, Kolkata 700 108, India}
\author{Goutam Paul}
\email{goutam.paul@isical.ac.in}
\affiliation{Cryptology and Security Research Unit, Indian Statistical Institute, Kolkata 700 108, India}

\begin{abstract}
A general framework of optimal eavesdropping on BB84 protocol was provided by Fuchs \etal. [\hyperref{http://journals.aps.org/pra/abstract/10.1103/PhysRevA.56.1163}{category}{name}{Phys. Rev. A, 1997}]. An upper bound on mutual information was 
derived, which could be achieved by a specific type of interaction and the corresponding measurement. However, uniqueness of optimal interaction was posed as an unsolved problem there and it has remained open for almost two decades now. In this paper, we solve this open problem and establish the uniqueness of optimal interaction up to rotation. The specific choice of optimal interaction by Fuchs \etal. is shown to be a special case of the form derived in our work. 
\end{abstract}
\maketitle

\section{Introduction}
\label{intro}
Symmetric key cryptography requires a secret key to be shared or distributed between the sender (say, Alice) and the receiver (say, Bob). The security of classical key distribution is based on hardness assumptions for solving certain computational problems. This gives security against computationally bounded adversary in the classical domain, but fails to guarantee security against quantum attacks. Quantum key distribution (QKD) is based on the principles of quantum mechanics. 
To encode classical bits, QKD uses quantum states which the attacker (say, Eve) cannot measure without creating disturbance detectable by Bob. QKD protocol does not require any computation complexity assumption and is provably secure against both classical as well as quantum adversaries. 

The first and possibly the most celebrated QKD protocol is BB84~\cite{bb84}. The protocol relies on the use of orthogonal states from one of the two conjugate bases, say, $\xy$ and $\uv$, to encode a bit-string in qubits (e.g., polarized photons). Alice randomly selects one of the two bases and encodes 0 and 1 respectively by a qubit prepared in one of the two states in each base. Say, Alice encodes 0 to $\ket{x}$ or $\ket{u}$, and 1 to $\ket{y}$ or $\ket{v}$, depending on the chosen basis. When Bob receives a state from Alice, he randomly selects a basis $\xy$ or $\uv$ and makes a measurement. Once the measurement is done for all the received qubits, Alice and Bob publicly announce the sequence of bases used by them and discard the bits where the bases do not match. The resulting bit string, followed by error correction and privacy amplification, becomes the common secret key. However, presence of an eavesdropper may disturb the state of a qubit sent by Alice for which Bob may get a wrong result even if the corresponding bases of measurement between Alice and Bob match. To overcome this problem, Alice and Bob sacrifice some of the bits by comparing their values publicly.

Fuchs \etal.~\cite{fuchs97} provided a general framework of optimal eavesdropping on BB84 protocol. They derived an upper bound on mutual information, described a specific type of interaction and the corresponding measurement that achieves the bound. They finally explained an optimal strategy for Eve in interpreting her measurement. However, the optimal interaction described there was a specific choice and the uniqueness of the optimal interaction was left as an open problem. They commented:
	``{\em It is easy to check that the solution here is correct, but the extent to which it is unique aside from trivial changes of basis and of phase͒ remains unknown}."

Interestingly, this problem has been open for last two decades. In this paper, we solve this open problem and establish the uniqueness up to rotation of the underlying basis.
We characterize the classes of interaction that can achieve the already-existing optimal bound given by~\cite{fuchs97}.
We have shown that the choice of optimal interaction in~\cite{fuchs97} is a special case of the generalized form provided by us. We also explicitly show the corresponding optimal measurement by Eve. 

Note that Fuchs \etal.~\cite{fuchs97} made an intelligent guess to arrive at the expression for optimal iteraction. On the other hand, in this paper, we explicitly derive the general form of the expression of any possible optimal interaction. See Sec.~\ref{connectfuchs} for a more elaborate discussion on this issue.

The content of this paper is organized as follows. Section~\ref{prelim} explains basic terminologies used for optimal eavesdropping introduced in~\cite{fuchs97}. Section~\ref{revisit} contains summary of certain results from~\cite{fuchs97} which are relevant to our work. Our results are explained in Sec.~\ref{ourResults}. The remaining portion discusses the connection of our results with~\cite{fuchs97} followed by a conclusion.

\section{Preliminaries}
\label{prelim}
Alice and Bob want to share a secret key using BB84 protocol. Alice randomly chooses a basis from   $\fB_{xy} = \{\ket{x}, \ket{y}\}$ and $\fB_{uv} = \{\ket{u}, \ket{v}\}$,
where
\begin{equation} \label{eq:xuv}
\ket{x} = \xuvExpr{u}{v}, \indent \ket{y} = \yuvExpr{u}{v},
\end{equation}
i.e., the bases are conjugate to each other.  \footnote{Note that a more common notation uses $\ket{0}$, $\ket{1}$, $\ket{+}$ and $\ket{-}$ instead of $\ket{x}$, $\ket{y}$, $\ket{u}$ and $\ket{v}$ respectively. However, we follow the same notations as in Fuchs \etal.~\cite{fuchs97} so that the connection to their work is easily visible.}

Alice encodes her key-bits, each as a polarized photon, and sends it to Bob. Suppose, an eavesdropper Eve interferes the communication while she lets a probe interact unitarily with the qubit sent by Alice. 

Suppose Alice has chosen a signal, say, $\ket{x}$ (corresponding density operator being $\rho_x^A = \ketbra{x}{x}$ ), in the basis  $\fB_{xy}$. Eve lets a probe, initially in state $\ket{\psi_0}$ (corresponding density operator $\rho_{0}^E = \ketbra{\psi_0}{\psi_0}$), interact unitarily (realized by a unitary operator $\U$) with the qubit sent by Alice. The post-interaction joint state $\ket{X}$ between Alice and Eve, which is an entangled state of the probe of Eve and the photon sent by Alice, is realized by
\begin{equation*} 
\ket{x}\tensorprod\ket{\psi_0}\xrightarrow{~\U~}~\ket{X}. 
\end{equation*}
Bob receives a simple mixture of the two basis vectors (here $\fB_{xy}$) chosen by Alice, i.e., Bob's density matrix is always diagonal in the basis chosen by Alice. Thus, Schmidt decomposition of the post-interaction joint state $\ket{X}$ must be of the form
\begin{equation*} 
\ket{X} = \entangledExprGen{\a}{x}{x}{y}{x}, 
\end{equation*} such that
\begin{equation} 
\label{rel:perp_x}
\ket{\xi_x} \perp \ket{\zeta_x},
\end{equation}
where $\ket{\xi_x}, \ket{\zeta_x}$ are component of Eve's part of the joint state after the interaction.

Similarly, when Alice sends $\ket{y}$, the post-interaction state $\ket{Y}$ must be of the form
\begin{equation*} 
\ket{Y} = \entangledExprGen{\b}{y}{y}{x}{y}, 
\end{equation*} such that
\begin{equation} 
\label{rel:perp_y}
\ket{\xi_y} \perp \ket{\zeta_y}.
\end{equation}
The density operator for the post-interaction state $\ket{X}$ is given by 
\begin{eqnarray} 
\label{eq:op-joint-state}
\rho_x^{AE} = \ketbra{X}{X} =  \U \left(\rho_x^A\tensorprod \rho_{0}^E\right) \U^\dagger.
\end{eqnarray} 
Eve's description of the system will be
\footnote{Henceforth, we use the notation $:=$ to denote ``defined as".}
\begin{eqnarray} \label{eq:rho_x} 
\rho_x &:=& \rho_x^{E} ~=~ \tr_A\left(\rho_x^{AE}\right) ~=~ \tr_A\left(\ketbra{X}{X}\right),
\end{eqnarray}
where $\tr_A$ represents partial trace over Alice's qubit.

Since the interaction is unitary, it follows from Eqs.~\mref{eq:xuv,eq:op-joint-state} that
\begin{eqnarray}
\label{eq:XUV}
\ket{X} = \xuvExpr{U}{V}, \indent \ket{Y} = \yuvExpr{U}{V}.
\end{eqnarray}

Before performing any measurement, Eve waits until Alice declares her choice of basis publicly. Eve's measurement is considered to be a Positive Operator-Valued Measure (POVM) $\{E_\l\}$ or $\{F_\l\}$ depending on whether Alice's choice is $x\mhyphen y$ or $u\mhyphen v$ basis. Note that the operators $\{E_\l\}$ satisfy two properties~\cite{fuchs96, qNC02}:
they are all non-negative definite, i.e., 
\begin{eqnarray*}
	\braopket{\gamma}{E_\l}{\gamma} &\ge& 0,~~\forall ~\ket{\gamma}, 
\end{eqnarray*}
and satisfy the completeness relation
\begin{eqnarray*}
	\sum\limits_\l E_\l &=& \mathbbm{1}.
\end{eqnarray*}

Suppose, Alice sends a signal in $\xy$ (or, $\uv$) basis with 
the prior probabilities $p_x, p_y$ (or, $p_u, p_v$) respectively. Once Alice reveals her basis to be $x\mhyphen y$, Eve uses a POVM $\{E_\l\}$ to perform a measurement on her probe. Considering $\A,\B,\E$ as random variables corresponding to the signal sent by Alice, signal received by Bob, and, measurement outcome of Eve, the conditional probability of the various outcomes $\l$ of that measurement is given by
\begin{eqnarray}
P_{\l x} &:=& \Pr[\E=\l | \A=x] = \tr\left(\rho_xE_\l\right),  \tab \label{eq:PlxTr} \\
P_{\l y} &:=& \Pr[\E=\l | \A=y] = \tr\left(\rho_yE_\l\right).  \tab \label{eq:PlyTr}  
\end{eqnarray} 

The probability that Eve gets outcome $\l$, when Alice uses $x\mhyphen y$ basis is thus
\begin{eqnarray*}
	q_{xy}(\l) &:=& \Pr[\E=\l] = P_{\l x}p_x + P_{\l y}p_y. 
\end{eqnarray*} 
Looking at outcome $\l$, Eve assigns a guess for the signal sent by Alice following some strategy. The posterior probability $Q_{x\l}$ (or $Q_{y\l}$) of the event that Alice had sent $x$ (or $y$) given that Eve has observed $\l$ is given by Bayes' theorem.
\begin{eqnarray*}
	Q_{x\l} &:=& \Pr[\A=x | \E=\l] = \frac{P_{\l x}p_x}{q_{xy}(\l)}, \\
	Q_{y\l} &:=& \Pr[\A=y | \E=\l] = \frac{P_{\l y}p_y}{q_{xy}(\l)}.
\end{eqnarray*}
A simple way that Eve can utilize these likelihoods is to perform a guess realized by the following function.
\vspace{0em}\[ \argmax\:\{Q_{x\l}, Q_{y\l}\} =   
\begin{cases}	
x \text{, ~~ if } Q_{x\l} > Q_{y\l}, \\
y \text{, ~~ if } Q_{y\l} > Q_{x\l}.
\end{cases}
\] 
A convenient measure of Eve's \textbf{information gain} for an outcome $\l$, as proposed in \cite{fuchs97}, is
\begin{eqnarray*}
	G_{xy}(\l) &:=& \left| Q_{x\l} - Q_{y\l} \right|. 
\end{eqnarray*}	
On average, Eve's information gain over all outcomes is
\begin{eqnarray*}
	G_{xy} &:=& \sum\limits_\l q_{xy}(\l)G_{xy}(\l) = \sum\limits_\l \left|P_{\l x}p_x-P_{\l y}p_y\right|.
\end{eqnarray*}
In particular, for equiprobable signals, 
\begin{eqnarray*} 
	G_{xy} &=& \half \sum\limits_\l \left|P_{\l x}-P_{\l y}\right|. 
\end{eqnarray*}	
A more sophisticated data processing by Eve is \textbf{mutual information} \cite{cover91}. For equal prior, this is given by
\begin{eqnarray*}
	I_{xy} &:=& \ln2+ \sum\limits_{\l} q_{xy}(\l) \left(Q_{x\l} \ln Q_{x\l} + Q_{y\l} \ln Q_{y\l} \right).
\end{eqnarray*}

Eve's attempt to measure the probe creates \textbf{disturbance} to the signal sent by Alice which is detectable by Bob. For signal sent in $\xy$ basis, the disturbance incorporated by Eve could be described by 
\begin{eqnarray*}
	D_{xy} &:=& \sum\limits_\l q_{xy}(\l) d_{xy}(\l),
\end{eqnarray*}
where, $d_{xy}(\l)$ is the avg error for Bob to read the signal sent by Alice while Eve detects $\l$. For equal prior, 
\begin{eqnarray*}
	d_{xy}(\l) &:=& \half \left(d_{\l x} + d_{\l y}\right),
\end{eqnarray*}
where, $d_{\l x}$ is the error for Bob when Alice sends $x$ while Eve detects $\l$ (i.e., Bob reads $y$), i.e., 
\begin{eqnarray*}
	d_{\l x} &:=& \Pr[\B=y|(\A=x,\E=\l)],
\end{eqnarray*}
and $d_{\l y}$ is the error for Bob when Alice sends $y$ while Eve detects $\l$ (i.e., Bob reads $x$), i.e., 
\begin{eqnarray*}
	d_{\l y} &:=& \Pr[\B=x|(\A=y,\E=\l)].
\end{eqnarray*}
Clearly, $D_{xy}$ is the observable error rate of Bob to read the signal sent by Alice prepared in $\xy$ basis.

Similarly, one can define $G_{uv},I_{uv},D_{uv}$ while considering Alice's signal was prepared in $\uv$ basis. We drop the subscripts $xy$ and $uv$, i.e., use $G,I$, when both the bases to be considered in discussion. 

\section{Summary of Optimal Eavesdropping by Fuchs \etal.~\cite{fuchs97}}
\label{revisit}
Optimal eavesdropping means that an eavesdropper performs the interaction and the measurement in such a way that she can extract maximum information about the signal sent by Alice, ensuring that the disturbance at Bob's end remains bounded by a suitable threshold. In the QKD literature, it is interpreted as maximizing the information gain by Eve or mutual information between Alice and Eve. For BB84 protocol, considering the interaction to be unitary and restricted to equal prior ($p_x=\half=p_y$), Fuchs \etal.~\cite{fuchs97} provided an upper bound on information gain and mutual information over all possible interaction-POVM pairs. A criterion to achieve the bounds was provided there. To show that these bounds are attainable, an interaction-POVM pair for unequal error rates and another for equal error rates were provided therein. These results are discussed briefly in this section. Since these results hold for equal prior, the subsequent sections follow the same assumption unless explicitly mentioned.

\subsection{Upper bounds on information gain ($G$) and mutual information ($I$)}
\label{bound}
For equal prior, Fuchs \etal. \cite{fuchs97} provided an upper bound on the information gain ($G$). This bound was used to provide an upper bound on the mutual information ($I$). A necessary and sufficient condition to achieve the bounds was given there. We recollect these results here.

\begin{proposition}{\em({\it An upper bound on information gain ($G$)})~\cite[Eqs.~(23,24)]{fuchs97}}
		\begin{eqnarray} 
		G_{xy} &\le& 2 \concave{D_{uv}} \label{eq:boundG_xy}, 
		\end{eqnarray}	
		\begin{eqnarray} 
		G_{uv} &\le& 2 \concave{D_{xy}} \label{eq:boundG_uv}.
		\end{eqnarray}
		Moreover, for measurement outcome $\l$ of Eve, the bound on information gain~\cite[Eq.~(20)]{fuchs97} can be expressed by the following inequality	
		\begin{eqnarray} 
		G_{xy}(\l) &\le& 2 \concavesq{d_{uv}(\l)}. \label{eq:boundGl_xy}
		\end{eqnarray}
\end{proposition}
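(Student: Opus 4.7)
The plan is to first establish the per-outcome inequality (\ref{eq:boundGl_xy}), and then derive the averaged bound (\ref{eq:boundG_xy}) by summing over $\l$ and invoking Jensen's inequality for the concave function $f(t) = 2\sqrt{t(1-t)}$ on $[0,1]$. The bound (\ref{eq:boundG_uv}) then follows from the same argument after interchanging the roles of the conjugate bases $\xy$ and $\uv$.

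For the per-outcome bound, the starting point is the identity (for equiprobable signals)
\begin{equation*}
G_{xy}(\l) = \frac{\left|P_{\l x} - P_{\l y}\right|}{P_{\l x} + P_{\l y}}.
\end{equation*}
The conjugate basis relation Eq.~(\ref{eq:XUV}) yields, by direct computation, the operator identities $\ketbra{X}{X} - \ketbra{Y}{Y} = \ketbra{U}{V} + \ketbra{V}{U}$ and $\ketbra{X}{X} + \ketbra{Y}{Y} = \ketbra{U}{U} + \ketbra{V}{V}$. The latter implies $P_{\l x}+P_{\l y}=P_{\l u}+P_{\l v}$, so $q_{xy}(\l)=q_{uv}(\l)$, while the former rewrites the numerator as $2\bigl|\Re\braopket{U}{\id\tensorprod E_\l}{V}\bigr|$. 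Next, I would decompose $\id = \ketbra{u}{u}+\ketbra{v}{v}$ on Alice's side and use the Schmidt forms of $\ket{U}, \ket{V}$ (guaranteed by the assumption that Bob's reduced state is diagonal in Alice's basis) to split $\braopket{U}{\id\tensorprod E_\l}{V}$ into two cross matrix elements involving the probe components $\ket{\xi_u},\ket{\zeta_u},\ket{\xi_v},\ket{\zeta_v}$. Applying Cauchy--Schwarz $|\braopket{\phi}{E_\l}{\psi}|\le\sqrt{\braopket{\phi}{E_\l}{\phi}\braopket{\psi}{E_\l}{\psi}}$ to each cross term and collecting gives, after straightforward simplification,
\begin{equation*}
G_{xy}(\l) \le \frac{2\sqrt{P_{\l u}P_{\l v}}}{P_{\l u}+P_{\l v}}\left[\sqrt{(1-d_{\l u})\,d_{\l v}} + \sqrt{d_{\l u}(1-d_{\l v})}\right].
\end{equation*}
The prefactor is at most $1$ by AM--GM, and the bracketed sum is bounded by $2\sqrt{d_{uv}(\l)(1-d_{uv}(\l))}$ via the elementary inequality obtained by squaring both sides and using $[\sqrt{p(1-p)} - \sqrt{q(1-q)}]^2 \ge 0$ with $p=d_{\l u},\, q=d_{\l v},\, d_{uv}(\l)=(p+q)/2$. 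This yields (\ref{eq:boundGl_xy}).

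For the averaged bound, multiplying (\ref{eq:boundGl_xy}) by $q_{xy}(\l) = q_{uv}(\l)$ and summing over $\l$ gives
\begin{equation*}
G_{xy}=\sum_\l q_{xy}(\l)\,G_{xy}(\l)\le\sum_\l q_{uv}(\l)\cdot 2\concavesq{d_{uv}(\l)}\le 2\concave{\sum_\l q_{uv}(\l)\,d_{uv}(\l)}=2\concave{D_{uv}},
\end{equation*}
where the second step is Jensen's inequality applied to the concave map $f(t)=2\sqrt{t(1-t)}$ on $[0,1]$.

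The main obstacle is the per-outcome step: the Cauchy--Schwarz inequalities must be orchestrated so that the resulting bound is expressed through Bob's disturbance $d_{uv}(\l)$ rather than through Eve's posterior $Q_{u\l}$ on the $\uv$ basis (a naive Cauchy--Schwarz on $\braopket{U}{\id\tensorprod E_\l}{V}$ yields only the posterior-based bound $G_{xy}(\l)\le 2\sqrt{Q_{u\l}Q_{v\l}}$, which does not average to $2\concave{D_{uv}}$). Splitting the matrix element along the $\uv$ basis on Alice's side, carefully exploiting the Schmidt structure of both $\ket{U}$ and $\ket{V}$, and then invoking the elementary inequality for $\sqrt{(1-p)q}+\sqrt{p(1-q)}$ is the non-trivial manipulation that makes the information--disturbance tradeoff in the form claimed emerge cleanly.
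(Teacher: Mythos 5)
The paper does not prove this Proposition at all; it simply restates Eqs.\ (20, 23, 24) of Fuchs \emph{et al.} with a citation, so there is no internal proof to compare against. Your reconstruction is correct and matches the argument in that cited reference step for step: the $\ket{X},\ket{Y}\leftrightarrow\ket{U},\ket{V}$ operator identities, the decomposition of $\braopket{U}{\id\otimes E_\l}{V}$ along Alice's $u$--$v$ basis using the Schmidt forms of $\ket{U},\ket{V}$, term-by-term Cauchy--Schwarz to pull out $\sqrt{P_{\l u}(1-d_{\l u})P_{\l v}d_{\l v}}$ and $\sqrt{P_{\l u}d_{\l u}P_{\l v}(1-d_{\l v})}$, the AM--GM bound on the prefactor together with the elementary inequality $\sqrt{(1-p)q}+\sqrt{p(1-q)}\le 2\sqrt{\bar d(1-\bar d)}$ (via $[\sqrt{p(1-p)}-\sqrt{q(1-q)}]^2\ge0$) to obtain Eq.~\eqref{eq:boundGl_xy}, and Jensen's inequality on the concave map $t\mapsto 2\sqrt{t(1-t)}$ together with $q_{xy}(\l)=q_{uv}(\l)$ to get Eq.~\eqref{eq:boundG_xy}, with Eq.~\eqref{eq:boundG_uv} by symmetry.
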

It it interesting to note that while Eve's information gain refers to signals sent in the $x\mhyphen y$ basis, Bob's error rate refers to signals sent in the $u\mhyphen v$ basis and vice versa.

\begin{proposition}{\em({\it An Upper Bound on Mutual Information ($I$)})~\cite[Eqs.~(31,32)]{fuchs97}}
	\begin{eqnarray} \label{eq:boundMI_xy}
	I_{xy} &\le& \half~\phiconcave{D_{uv}},
	\end{eqnarray}	
	\begin{eqnarray} \label{eq:boundMI_uv}
	I_{uv} &\le& \half~\phiconcave{D_{xy}},
	\end{eqnarray}
	where $\phi(z) = \philog{z}$.
\end{proposition}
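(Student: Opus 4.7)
My plan is to convert the pointwise information-gain bound from the previous proposition into a bound on mutual information in three steps: (i) rewrite $I_{xy}$ in closed form as a $q_{xy}$-average of $\phi(G_{xy}(\l))$; (ii) use Eq.~\ref{eq:boundGl_xy} together with monotonicity of $\phi$ to replace $G_{xy}(\l)$ by $2\concavesq{d_{uv}(\l)}$ pointwise; (iii) invoke Jensen's inequality to swap the average with the nonlinear function, for which I must verify that a certain composite is concave.

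For step (i), I use $Q_{x\l}+Q_{y\l}=1$ together with $G_{xy}(\l)=|Q_{x\l}-Q_{y\l}|$, which identifies the pair $\{Q_{x\l},Q_{y\l}\}$ with $\{(1\pm G_{xy}(\l))/2\}$. Substituting into the defining sum for $I_{xy}$, the additive $\ln 2$ cancels the entropy's $-\ln 2$ contributions and the posterior-entropy term collapses to a single function of $G_{xy}(\l)$, giving
\begin{eqnarray*}
I_{xy} = \half\sum_{\l} q_{xy}(\l)\, \phi\bigl(G_{xy}(\l)\bigr).
\end{eqnarray*}
Because $\phi$ is even, nonnegative, and monotonically increasing on $[0,1]$, enclosing Eq.~\ref{eq:boundGl_xy} in $\phi$ yields the pointwise majorization
\begin{eqnarray*}
I_{xy} \le \half\sum_{\l} q_{xy}(\l)\,\phiconcave{d_{uv}(\l)}.
\end{eqnarray*}

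For step (iii) I apply Jensen's inequality to $h(z):=\phiconcave{z}$, which requires $h$ to be concave on $[0,1]$. A useful input is a BB84-specific symmetry: tracing out Alice in the identity $\half(\ketbra{X}{X}+\ketbra{Y}{Y})=\half(\ketbra{U}{U}+\ketbra{V}{V})$ from Eq.~\ref{eq:XUV} shows that Eve's averaged state is independent of Alice's chosen basis, so that $q_{xy}(\l)=q_{uv}(\l)$ for every outcome $\l$ (and $D_{uv}$, being a property of the interaction alone, does not depend on Eve's POVM on that branch; one may as well use the same POVM throughout). Hence $\sum_\l q_{xy}(\l)d_{uv}(\l)=D_{uv}$, and Jensen's inequality closes out the estimate to $I_{xy}\le \half\,\phiconcave{D_{uv}}$. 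The companion inequality for $I_{uv}$ follows by swapping the two bases.

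The main obstacle is verifying the concavity of $h$. Setting $w(z):=2\concave{z}$, implicit differentiation of $w^2=4z(1-z)$ yields $(w')^2=4(1-w^2)/w^2$ and $w''=-4/w^3$; combining these with $\phi''(w)=2/(1-w^2)$, a short calculation produces
\begin{eqnarray*}
h''(z)=\frac{4}{w^3}\bigl(2w-\phi'(w)\bigr).
\end{eqnarray*}
Concavity of $h$ therefore reduces to the inequality $\phi'(w)\ge 2w$ on $(0,1)$. Since $\phi'(w)=\ln\frac{1+w}{1-w}$, the function $g(w):=\phi'(w)-2w$ satisfies $g(0)=0$ and $g'(w)=2w^2/(1-w^2)>0$ for $w\in(0,1)$, so $g\ge 0$ there. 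This secures the concavity of $h$ and closes the proof.
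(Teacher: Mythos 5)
Your proposal is correct, and worth noting is that the paper itself does not prove this proposition: it states it as a result imported from Fuchs \emph{et al.}\ (their Eqs.~(31)--(32)) with no argument given. Your reconstruction is a faithful rendering of the original Fuchs \emph{et al.}\ derivation. Step~(i) checks out: with $Q_{x\l}+Q_{y\l}=1$ and $G_{xy}(\l)=|Q_{x\l}-Q_{y\l}|$ one gets $Q_{x\l}\ln Q_{x\l}+Q_{y\l}\ln Q_{y\l}=\tfrac12\phi(G_{xy}(\l))-\ln 2$, and the $\ln 2$ terms cancel exactly. Step~(ii) is sound since $\phi'(w)=\ln\tfrac{1+w}{1-w}\ge 0$ on $[0,1)$. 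The concavity computation in step~(iii) is correct: $(w')^2=4(1-w^2)/w^2$ and $w''=-4/w^3$ give $h''=\tfrac{4}{w^3}(2w-\phi'(w))$, and $\phi'(w)\ge 2w$ on $(0,1)$ follows from $g(0)=0$, $g'(w)=2w^2/(1-w^2)>0$; by the symmetry $h(z)=h(1-z)$ and the vanishing of $h'$ at $z=1/2$ this extends concavity to all of $[0,1]$. The one subtle point you handled correctly is the identity $\sum_\l q_{xy}(\l)\,d_{uv}(\l)=D_{uv}$, which requires both (a) the basis-invariance of Eve's marginal outcome distribution, $q_{xy}(\l)=q_{uv}(\l)$, which you correctly extract from $\tfrac12(\ketbra{X}{X}+\ketbra{Y}{Y})=\tfrac12(\ketbra{U}{U}+\ketbra{V}{V})$ implied by Eq.~\eqref{eq:XUV}, and (b) that $D_{uv}$ is a property of the interaction alone and can be decomposed against any POVM's outcome probabilities, in particular against the $\{E_\l\}$ used for the $x\mh y$ branch. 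Both observations are necessary and both are present; the argument is complete.
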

Subscripts emphasize that the mutual information and error rate refer to signals sent in two different bases.

\begin{proposition}
\label{condns:optG} 
{\em({\it Necessary and Sufficient Conditions to Achieve $G^{\star}$})\footnote{$q^{\star}$ denotes optimal (maximum) value for any quantity $q$.}~\cite[Eqs.~(38,39)]{fuchs97}}\\ 
The necessary and sufficient conditions for equality in Eq.~\eqref{eq:boundG_xy} are 
		\begin{eqnarray} \label{eq:condn_optG_1}
		\ket{V_{\l u}} &=& \varepsilon_\l~\factor{uv}~ ~\ket{U_{\l u}} 
		\end{eqnarray}
		and 
		\begin{eqnarray} \label{eq:condn_optG_2}
		\ket{U_{\l v}} &=& \varepsilon_\l~\factor{uv}~ ~\ket{V_{\l v}}, 
		\end{eqnarray}
		where 
		\begin{eqnarray} \label{eq:sign:el}
		\varepsilon_\l &=& \pm1 = \sgn\left(Q_{x\l} - Q_{y\l}\right) 
		\end{eqnarray}
		and 
		\begin{eqnarray} \label{eq:Ulu}
		\ket{U_{\l u}} = \Xlambdax{U}{u}, \tab \ket{V_{\l u}} = \Xlambdax{V}{u}, \nonumber\\ 
		\ket{U_{\l v}} = \Xlambdax{U}{v}, \tab \ket{V_{\l v}} = \Xlambdax{V}{v}, \nonumber\\
		B_u = \ketbra{u}{u}, ~~ B_v = \ketbra{v}{v}, ~~~\text{ with }~~ B_u + B_v = \mathbbm{1}. ~~\nonumber\\
		\end{eqnarray} 
		Similar conditions hold for a signal prepared in $\uv$ basis to attain the equality in Eq.~\mref{eq:boundG_uv}. 
	\end{proposition}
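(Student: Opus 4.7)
\emph{Proof plan.} The plan is to open up the information gain $G_{xy}=\half\sum_\l |P_{\l x}-P_{\l y}|$ and squeeze it through a chain of four inequalities whose simultaneous equality conditions will translate directly into Eqs.~\eqref{eq:condn_optG_1}--\eqref{eq:condn_optG_2}. First I would substitute Eq.~\eqref{eq:XUV} into Eqs.~\eqref{eq:PlxTr}--\eqref{eq:PlyTr}: the $\langle U|\cdot|U\rangle$ and $\langle V|\cdot|V\rangle$ diagonals cancel in the subtraction, leaving $P_{\l x}-P_{\l y}=2\,\Re\langle U|(\id\otimes E_\l)|V\rangle$. Inserting the resolution $\id=B_u+B_v$ on Alice's side and the factorization $E_\l=\sqrt{E_\l}\sqrt{E_\l}$ on Eve's side, and using the notation of Eq.~\eqref{eq:Ulu}, this becomes
\begin{equation*}
P_{\l x}-P_{\l y}=2\,\Re\bigl[\langle U_{\l u}|V_{\l u}\rangle + \langle U_{\l v}|V_{\l v}\rangle\bigr].
\end{equation*}

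Next I would chain the following bounds: (a) $|\Re z|\le|z|$; (b) the triangle inequality inside the modulus; (c) Cauchy--Schwarz on each inner product, which introduces terms of the form $\|U_{\l u}\|\,\|V_{\l u}\|$; and (d) Cauchy--Schwarz on the resulting $\l$-sums, in the form $\sum_\l\sqrt{a_\l b_\l}\le\sqrt{\sum_\l a_\l}\,\sqrt{\sum_\l b_\l}$. Here the Schmidt-type form of $\ket{U}$ and $\ket{V}$ in the $\uv$-basis (with weights $1-D_{uv}$ and $D_{uv}$) extracts the scalar factor $\concave{D_{uv}}$ from each product of norms, while the completeness relation $\sum_\l E_\l=\id$ collapses the $a_\l,b_\l$ sums to unity. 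Adding the two resulting contributions yields exactly the bound $G_{xy}\le 2\concave{D_{uv}}$ of Eq.~\eqref{eq:boundG_xy}.

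For equality, all of (a)--(d) must be saturated simultaneously. Saturation of (c) forces $\ket{V_{\l u}}\propto\ket{U_{\l u}}$ and $\ket{U_{\l v}}\propto\ket{V_{\l v}}$ for every $\l$; saturation of (d), together with the completeness constraints, pins the two proportionality ratios to be exactly $\factor{}$; and saturation of (a)--(b) forces the two inner products to be real and to carry a common sign $\varepsilon_\l\in\{\pm 1\}$. Collecting these conditions reproduces Eqs.~\eqref{eq:condn_optG_1}--\eqref{eq:condn_optG_2} verbatim, with $\varepsilon_\l=\sgn(P_{\l x}-P_{\l y})$; under equal prior, $Q_{x\l}\propto P_{\l x}$ and $Q_{y\l}\propto P_{\l y}$ with the same constant of proportionality, so this sign coincides with $\sgn(Q_{x\l}-Q_{y\l})$ as claimed in Eq.~\eqref{eq:sign:el}. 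The analogous conditions for the $G_{uv}$ bound follow by interchanging $\xy\leftrightarrow\uv$ throughout.

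The main subtlety, and the step likely to require the most care, is step~(d). Cauchy--Schwarz on the $\l$-sum on its own only forces the two sequences of squared norms to be \emph{proportional} in $\l$; to upgrade proportionality to genuine equality--which is what fixes the ratio $\|V_{\l u}\|/\|U_{\l u}\|$ to the clean value $\factor{}$ and not to some $\l$-dependent multiple thereof--one must additionally invoke the unit-trace identities inherited from $\sum_\l E_\l=\id$. Once every saturation condition is imposed, the sufficiency direction is a direct back-substitution.
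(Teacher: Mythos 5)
The paper itself does not prove Proposition~3: it is stated as a citation to Fuchs \etal.~\cite[Eqs.~(38,39)]{fuchs97}, and the derivation lives in that reference (Sec.~II there). Your reconstruction is correct and recovers essentially the argument Fuchs \etal. give: the rewriting $P_{\l x}-P_{\l y}=2\,\Re\bigl[\langle U_{\l u}|V_{\l u}\rangle+\langle U_{\l v}|V_{\l v}\rangle\bigr]$ via Eq.~\eqref{eq:XUV} and the resolution $\id=B_u+B_v$, $E_\l=\sqrt{E_\l}\sqrt{E_\l}$; the chain of bounds~(a)--(d); the evaluation of the norm sums via $\sum_\l E_\l=\id$ together with the Schmidt weights $1-D_{uv},\,D_{uv}$ read off from Eq.~\eqref{eq:entangledExprUV}; and the identification $\varepsilon_\l=\sgn(P_{\l x}-P_{\l y})=\sgn(Q_{x\l}-Q_{y\l})$ at equal priors. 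This is the same route as the cited proof, organized as a single pass through the $\l$-sum rather than the pointwise bound~\eqref{eq:boundGl_xy} followed by concavity, but the content is identical.

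One point of wording in your closing paragraph is slightly off: equality in Cauchy--Schwarz on the $\l$-sum already produces an $\l$-\emph{independent} ratio $\|V_{\l u}\|/\|U_{\l u}\|=c$; the role of the completeness and Schmidt identities is to pin down the numerical value $c=\factor{uv}$, not to ``upgrade proportionality to equality.'' The point actually worth stressing as subtle is a different one: saturation of~(c) alone gives scalars $c_\l^{(u)},c_\l^{(v)}$ that could in principle be complex and independent; it is the simultaneous saturation of~(a) and~(b), which couples the $u$- and $v$-sectors inside a single modulus, that forces $\langle U_{\l u}|V_{\l u}\rangle$ and $\langle U_{\l v}|V_{\l v}\rangle$ to be real with a \emph{common} sign $\varepsilon_\l$, making the single sign in Eqs.~\eqref{eq:condn_optG_1} and~\eqref{eq:condn_optG_2} well defined rather than two independent ones. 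You do invoke this, but the phrasing underplays that (b) is what ties the two sectors together.
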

It is intriguing to note that the set of conditions that optimizes $G$ also optimizes $I$. Therefore, the necessary and sufficient conditions for equality in Eqs.~\mref{eq:boundMI_xy,eq:boundMI_uv} are also the same as those in Proposition~\ref{condns:optG}. That is to say, for a signal sent in $\xy$ basis, an interaction-POVM pair that attains the bound in Eq.~\mref{eq:boundG_xy} does the same in Eq.~\mref{eq:boundMI_xy} and vice versa. For the other basis, similar statement holds for Eqs.~\mref{eq:boundG_uv} and~\mref{eq:boundMI_uv}.

\subsection{Description of the postinteraction states $\ket{X}, \ket{Y}$ \label{postIntrSt}}
Eve's objective is to maximize $G$ or $I$, irrespective of what basis was used by Alice for encoding. Both the bounds~\mref{eq:boundMI_xy,eq:boundMI_uv} [and therefore the bounds~\mref{eq:boundG_xy,eq:boundG_uv}] could be achieved simultaneously while fixing $D_{xy}, D_{uv}$ independently~\cite{fuchs97}. One of the conditions that must hold to achieve the bounds in $\xy$ basis is the following~\cite[Eq.~(33)]{fuchs97}: 
\begin{equation*} d_{\l u} = d_{\l v} = d_{uv}(\l) = D_{uv}, ~~\forall \l .\end{equation*}
A similar condition holds good for signals sent in $\uv$ basis.

Thus, for a signal sent in $\xy$ basis, the Schmidt decomposition of the postinteraction states are
\begin{eqnarray}  \label{eq:entangledExprXY}
\ket{X} &=& \entangledExpr{xy}{x}{x}{y}{x}, \nonumber \\
\ket{Y} &=& \entangledExpr{xy}{y}{y}{x}{y}. 
\end{eqnarray} 
Assuming that all inner products $\braket{\xi_i}{\zeta_j}$ are real, the restrictions~\mref{rel:perp_x,rel:perp_y} on $\ket{\xi_i}, \ket{\zeta_j}$ becomes more restricted as
\begin{eqnarray} \label{rel:perp_opt}
\{\ket{\xi_x},\ket{\xi_y}\} \perp \{\ket{\zeta_x}, \ket{\zeta_y}\}. 
\end{eqnarray}
Similarly, for a signal sent in $\uv$ basis, the post-interaction states are
\begin{eqnarray} \label{eq:entangledExprUV}
\ket{U} &=& \entangledExpr{uv}{u}{u}{v}{u}, \nonumber \\
\ket{V} &=& \entangledExpr{uv}{v}{v}{u}{v}. 
\end{eqnarray}
Since the bases $\fB_{xy}$ and $\fB_{uv}$ are conjugate to each other, we expect to get a relationship between $\ket{\xi_i},\ket{\zeta_j}$ in $\uv$ basis and those in $\xy$ basis which is described below.
\newcommand*{\Scale}[2][4]{\scalebox{#1}{$#2$}} 
\begin{eqnarray} \label{eq:xiuxy}
\Scale[0.98]{
2\sqrt{1\!-\!D_{uv}}\ket{\xi_u} \!=\! \sqrt{1\!-\!D_{xy}}(\ket{\xi_x}\!+\!\ket{\xi_y}) \!+\!%
\sqrt{D_{xy}}(\ket{\zeta_x}\!+\!\ket{\zeta_y}),} \nonumber\bigskip\\
\Scale[0.97]{
2\sqrt{D_{uv}}\ket{\zeta_u} \!=\! \sqrt{1\!-\!D_{xy}}(\ket{\xi_x}\!-\!\ket{\xi_y}) \!+\!%
\sqrt{D_{xy}}(\ket{\zeta_y}\!-\!\ket{\zeta_x}).} \nonumber\\
\end{eqnarray}
Similarly, 
\begin{eqnarray} \label{eq:xivxy}
\Scale[0.97]{
	2\sqrt{1\!-\!D_{uv}}\ket{\xi_v} \!=\! \sqrt{1\!-\!D_{xy}}(\ket{\xi_x}\!+\!\ket{\xi_y}) \!-\!%
	\sqrt{D_{xy}}(\ket{\zeta_x}\!+\!\ket{\zeta_y}),} \nonumber\bigskip\\
\Scale[0.97]{
	2\sqrt{D_{uv}}\ket{\zeta_v} \!=\! \sqrt{1\!-\!D_{xy}}(\ket{\xi_x}\!-\!\ket{\xi_y}) \!-\!%
	\sqrt{D_{xy}}(\ket{\zeta_y}\!-\!\ket{\zeta_x}).} \nonumber\\
\end{eqnarray}

From the orthogonality relation~\eqref{rel:perp_opt}, one can say that 
Eve's probe lives in a Hilbert space of dimension at most four, and thus is taken to be made of 2 qubits (4 states). It is therefore convenient to introduce same bases ($x\mhyphen y$ and $u\mhyphen v$, used by Alice) for each of Eve's qubits.

\subsection{Optimal interaction to maximize $G,I$: \\A specific choice \label{optPOVM-choice}}
Any interaction, as described above, that leads to optimality (i.e., attains $G^{\star}$ or $I^{\star}$) could be chosen. In~\cite[Sec.~III: Eqs.~(50,51)]{fuchs97}, one such specific choice was made for unequal error rates, which was shown to be a correct choice (correct in the sense that the choice leads to optimality). Similarly, for equal error rates, another specific choice was made in \cite[Sec.~IV, Eq.~(69)]{fuchs97}. However, uniqueness of the choice was left as an open problem in~\cite[Sec.~III, first paragraph]{fuchs97}.

\subsubsection{\textbf{For unequal error rates, i.e.,} $D_{xy} \ne D_{uv}$}
\label{optPOVM-choice-uneqD}
Equations~(50,~51) of \cite[Sec.~III]{fuchs97} are restated here.
Consider a canonical basis for Eve's probe as $\{\ket{\E_0}, \ket{\E_1}, \ket{\E_2}, \ket{\E_3}\}$. Without loss of generality, 
\begin{equation}
\label{eq:basis_cano_Eve}
\ket{\E_0}=\txx, \ket{\E_1}=\tyx, \ket{\E_2}=\txy, \ket{\E_3}=\tyy.
\end{equation}
To describe $\ket{\xi_i}, \ket{\zeta_j}$, the work~\cite{fuchs97} considered an orthonormal set, namely, the Bell Basis with respect to (w.r.t.) $\xy$, as follows.
\begin{eqnarray} 
\label{eq:phi_xy}
\Phixy{\pm} &:=& \PhixyExpr{\pm} = \invsqrttwo\left(\ket{\E_0}\pm\ket{\E_3}\right), \nonumber \\
\Psixy{\pm} &:=& \PsixyExpr{\pm} = \invsqrttwo\left(\ket{\E_2}\pm\ket{\E_1}\right). \nonumber \\
\end{eqnarray}
\noindent In terms of the Bell basis vectors for Eve's probe, the interaction was chosen such that
\begin{eqnarray} 
\label{eq:optimal_xi-zeta}
\ket{\xi_x} &=& \sqrt{1-D_{uv}}~\Phixy{+} + \sqrt{D_{uv}}~\Phixy{-}, \nonumber\\
\ket{\xi_y} &=& \sqrt{1-D_{uv}}~\Phixy{+} - \sqrt{D_{uv}}~\Phixy{-}, \nonumber\\
\ket{\zeta_x} &=& \sqrt{1-D_{uv}}~\Psixy{+} - \sqrt{D_{uv}}~\Psixy{-}, \nonumber\\
\ket{\zeta_y} &=& \sqrt{1-D_{uv}}~\Psixy{+} + \sqrt{D_{uv}}~\Psixy{-}. 
\end{eqnarray}
The corresponding optimal POVM, as shown in \cite[Eqs.~(55,56)]{fuchs97},
is described below.
\begin{equation*}
\El = \ketbra{\El}{\El},
\end{equation*}
where 
\begin{equation}\label{eq:optEl}
\ket{E_0}\!=\!\ket{\E_0}, ~\ket{E_1}\!=\ket{\E_1}, ~\ket{E_2}\!=\!\ket{\E_2}, ~\ket{E_3}\!=\!\ket{\E_3}.~~
\end{equation} 

Introducing new notations $\Dp_{uv}, \Dm_{uv}$, we can write a closed form
of $\ket{\xi_i}, \ket{\zeta_j}$ as below.
	\begin{eqnarray} \label{eq:opt_xi-zeta}
	\ket{\xi_x} &=& \Dp_{uv}~\ket{\E_0} + \Dm_{uv}~\ket{\E_3}, \nonumber\\
	\ket{\xi_y} &=& \Dm_{uv}~\ket{\E_0} + \Dp_{uv}~\ket{\E_3}, \nonumber\\
	\ket{\zeta_x} &=& \Dm_{uv}~\ket{\E_2} + \Dp_{uv}~\ket{\E_1}, \nonumber\\
	\ket{\zeta_y} &=& \Dp_{uv}~\ket{\E_2} + \Dm_{uv}~\ket{\E_1},
	\end{eqnarray} 
	where 
	\begin{eqnarray}  \label{rel:DpDm}
	\Dp_{uv} & := & \frac{\sqrt{1-D_{uv}}+\sqrt{D_{uv}}}{\sqrt{2}},\nonumber\\
\Dm_{uv} & := & \frac{\sqrt{1-D_{uv}}-\sqrt{D_{uv}}}{\sqrt{2}}. 
	\end{eqnarray}
The following relations appear to be useful.
		\begin{eqnarray}\label{rel:calD_uv} 
		\Dp_{uv}\cdot\Dm_{uv} & = & \half\left(1-2D_{uv}\right),\nonumber\\
\Dp_{uv}^2+\Dm_{uv}^2 & = & 1, \nonumber\\
		\Dp_{uv}^2-\Dm_{uv}^2 & = & 2\concave{D_{uv}}.
		\end{eqnarray}
The above analysis works for a signal sent in $\xy$ basis. Similar analysis holds for $\uv$ basis as well.

\subsubsection{\textbf{For equal error eates, i.e., $D_{xy}=D_{uv}=D$} \label{optPOVM-choice-eqD}}
For equal error rates, \cite[Sec.~IV, Eq.~(69)]{fuchs97} comes up with another choice of $\ket{\xi_i}, \ket{\zeta_j}$. We describe it as below.
\begin{eqnarray} \label{eq:choice:xi_zeta}
\ket{\xi_x} &=& \txx,  \nonumber\\
\ket{\xi_y} &=& \xyscaled{\alpha} \ket{x} \nonumber\\
\ket{\zeta_x} &=& \txy,  \nonumber\\
\ket{\zeta_y} &=& \xyscaled{\beta}\ket{y}.
\end{eqnarray}
Optimality of $G$ (or $I$) is reached when 
\begin{eqnarray*}
\alpha = \beta &\text{~~and~~}& \sin\alpha = 2\concave{D} = \Dp^2-\Dm^2.
\end{eqnarray*}
Here, the notations $\Dp,\Dm$ are analogous to $\Dp_{uv},\Dm_{uv}$ in Eq.~\eqref{rel:DpDm} but for equal error rates, i.e., to consider $D$ than $D_{uv}$ for the right hand side of Eq.~\eqref{rel:DpDm}.

Thus, the optimal interaction can be written as 
\begin{eqnarray} \label{eq:choice:xi_zeta:opt}
\ket{\xi_x} &=& \ket{\E_0}, \nonumber\\
\ket{\xi_y} &=& 2\Dp\Dm~\ket{\E_0}+\left(\Dp^2-\Dm^2\right)\ket{\E_1}, \nonumber\\
\ket{\zeta_x} &=& \ket{\E_2}, \nonumber\\
\ket{\zeta_y} &=& 2\Dp\Dm~\ket{\E_2}+\left(\Dp^2-\Dm^2\right)\ket{\E_3}.
\end{eqnarray}
However, the corresponding optimal POVM was not shown explicitly in \cite{fuchs97}, which we establish in Sec.~\ref{optPOVM_eqD}.

Although, both interactions~\mref{eq:opt_xi-zeta,eq:choice:xi_zeta:opt} lead to optimality, the way they were proposed in~\cite{fuchs97} seems to be an intelligent guesswork. This leaves open a few interesting questions:
\begin{enumerate}[noitemsep]
	\item Instead of guessing an interaction and verifying its optimality, can one derive it from first principle?
	\item Are there other possible optimal interactions than the two specific ones?
	\item If so, is it possible to characterize them? 
\end{enumerate}
We address these questions in the following section.
	

\section{Our Results}
\label{ourResults}
In this section, we derive an expression for an interaction by Eve that leads to optimal information gain. Eventually, we show that the expression is unique in a fixed basis. Associated optimal POVMs are then identified.

Given an interaction, how to identify an optimal POVM is discussed in the following subsection.
\subsection{Optimal measurement (POVM) to maximize information gain ($G$) for a given interaction}
\label{optPOVM}
Let's consider the problem below: given an interaction,
\begin{eqnarray*}
	\text{maximize } G_{xy} &=& \sum\limits_\l \left|P_{\l x}p_x-P_{\l y}p_y\right|
\end{eqnarray*}
over all POVMs $\{E_\l\}$.	

In~\cite{fuchs96}, an optimal measurement for this maximization was derived.
There, the maximization was done on {\em Kolmogorov Variational Distance}~\cite[Eq.~(130)]{fuchs96}. The calculation was performed 
in~\cite[Appendix (Sec.~7)]{fuchs96}, which shows that the optimal measurement corresponds to a Hermitian operator given by~\cite[Eq.~(21)]{fuchs96} and the optimal POVM consists of the projectors onto an orthonormal eigenbasis of that operator. We describe the result here with a proof in terms of maximizing $G$. Note that this result is presented here for the sake of completeness and easy reference and we do not claim any contribution for this result.
\begin{lemma}
	\label{th:optPOVM}
	Given an interaction, an optimal POVM to attain maximum information gain consists of the eigenprojectors $\{E_\l\}$ onto the orthonormal eigenbasis $\{\ket{E_\l}\}$ that diagonalizes the Hermitian operator
	\begin{equation} \label{eq:gamma} \wt{\Gamma}_{xy} := p_x\rho_x - p_y\rho_y, \end{equation} 
	where $\rho_x$, as defined in Eq.~\eqref{eq:rho_x}, is the partial trace (over Alice's qubit) of the post-interaction state $\ket{X}$. The maximum achievable information gain is $\tr\left|\wt{\Gamma}_{xy}\right|$.
\end{lemma}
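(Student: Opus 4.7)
The plan is to recast $G_{xy}$ in operator-theoretic terms so that it depends on the POVM only through a single Hermitian operator, and then apply a spectral-theorem plus triangle-inequality argument---essentially the Holevo--Helstrom style optimization for distinguishing two quantum states. Using Eqs.~\eqref{eq:PlxTr} and~\eqref{eq:PlyTr}, I would first rewrite
\[
P_{\l x}p_x - P_{\l y}p_y = \tr\bigl((p_x\rho_x - p_y\rho_y)E_\l\bigr) = \tr(\wt{\Gamma}_{xy} E_\l),
\]
so the task reduces to maximizing $\sum_\l |\tr(\wt{\Gamma}_{xy}E_\l)|$ over all POVMs $\{E_\l\}$ obeying $E_\l\succeq 0$ and $\sum_\l E_\l = \id$.

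Next, I would exploit that $\wt{\Gamma}_{xy}$ is Hermitian and hence admits a spectral decomposition $\wt{\Gamma}_{xy} = \sum_k \mu_k \ketbra{E_k}{E_k}$ with real eigenvalues $\mu_k$ and an orthonormal eigenbasis $\{\ket{E_k}\}$. Substituting into the trace and applying the triangle inequality, together with the positivity $\braopket{E_k}{E_\l}{E_k}\ge 0$ inherited from $E_\l\succeq 0$, gives
\[
|\tr(\wt{\Gamma}_{xy}E_\l)| \le \sum_k |\mu_k|\,\braopket{E_k}{E_\l}{E_k}.
\]
Summing over $\l$ and invoking POVM completeness $\sum_\l E_\l = \id$ then collapses the right-hand side to $\sum_k|\mu_k| = \tr|\wt{\Gamma}_{xy}|$, establishing the universal upper bound.

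For attainability, I would test the projective POVM $E_\l := \ketbra{E_\l}{E_\l}$ built from the eigenbasis of $\wt{\Gamma}_{xy}$: a direct computation yields $\tr(\wt{\Gamma}_{xy}E_\l) = \mu_\l$, hence $G_{xy} = \sum_\l |\mu_\l| = \tr|\wt{\Gamma}_{xy}|$, saturating the bound. The main subtlety---not really an obstacle---is checking that the triangle inequality is tight for this choice, which is automatic because only one eigenvector contributes to each outcome. Note that the argument yields optimality of the eigenprojector POVM but does not assert uniqueness (degeneracies in the spectrum permit unitary rotations within eigenspaces), which is consistent with the lemma's phrasing ``an optimal POVM.''
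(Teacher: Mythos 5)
Your proposal is correct and follows essentially the same route as the paper's own proof: rewrite $G_{xy}$ as $\sum_\l|\tr(\wt{\Gamma}_{xy}E_\l)|$, use the spectral decomposition of $\wt{\Gamma}_{xy}$ plus the triangle inequality and POVM positivity/completeness to bound it by $\tr|\wt{\Gamma}_{xy}|$, then exhibit the eigenprojector POVM as saturating. Your brief attainability check (that each $\tr(\wt{\Gamma}_{xy}E_\l)=\mu_\l$) is slightly more explicit than the paper's, but it is the same argument.
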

\begin{proof} Given an interaction (i.e., the density operators $\rho_{x}, \rho_{y}$ get fixed), the associated $\wt{\Gamma}_{xy}$ being Hermitian is diagonalizable by an orthonormal eigenbasis $\{\ket{\gamma_i}\}$. Let the corresponding eigenvalues (all real) are $\{\gamma_i\}$. Then, over all POVMs $\{E_\l\}$,
	\begin{eqnarray*}  
		G_{xy} &=& \sum\limits_\l \left|P_{\l x}p_x-P_{\l y}p_y\right| \\
		\ &=& \sum\limits_\l \left| p_x\tr\left(\rho_xE_\l\right) - p_y\tr\left(\rho_yE_\l\right)\right|  \text{, using Eqs.}~\mref{eq:PlxTr,eq:PlyTr} \\
		\ &=& \sum\limits_\l \left| \tr\left(\wt{\Gamma}_{xy} E_\l \right) \right| \text{, using Eq.}~\eqref{eq:gamma} \\
		\ &=& \sum\limits_\l \left|\sum\limits_{i} \gamma_i~\braopket{\gamma_i}{E_\l}{\gamma_i}  \right| \\
		\\
		\ &\le& \sum\limits_\l \sum\limits_{i} \left|\gamma_i\right|~\braopket{\gamma_i}{ E_\l}{\gamma_i} \\ 
		\ &=& \sum\limits_{i} \left|\gamma_i\right|~\braopket{\gamma_i}{\sum\limits_\l E_\l}{\gamma_i} \\ 
		\ &=& \sum\limits_{i} \left|\gamma_i\right| 
		= \tr\left|\wt{\Gamma}_{xy}\right|. 
	\end{eqnarray*}
	The upper bound could be achieved by some POVM $\{E_\l\}$ consisting of the projectors onto an orthonormal eigenbasis of $\wt{\Gamma}_{xy}$.
\end{proof}
\begin{remark}
	Since we consider equal prior probabilities here, analogous to Eq.~\eqref{eq:gamma}, we define 
	\begin{equation} \label{eq:gamma_eq} \Gamma_{xy} := \half\left(\rho_x - \rho_y\right) \end{equation}
	and use it throughout the rest of the paper.
\end{remark}

\begin{remark}
	Given an interaction, a POVM optimal for $G_{xy}$ may not necessarily be optimal for $I_{xy}$~\cite{fuchs97,fuchs96}. However, for equal prior, once the bound $\tr\left|\Gamma_{xy}\right|$  of $G_{xy}$ in Lemma~\ref{th:optPOVM} becomes equal to the upper bound $\Dp_{uv}^2-\Dm_{uv}^2$ of $G_{xy}$ in Eq.~\eqref{eq:boundG_xy}, the interaction is called optimal. In that case, the interaction-POVM pair also attains the upper bound~\eqref{eq:boundMI_xy} of $I_{xy}.$
\end{remark}

\subsection{Optimal interaction to maximize information gain ($G$): A generic form of optimal $\ket{\xi_i},\ket{\zeta_j}$ \label{optInteraction}} 
We use the following result for equal priors to find an expression of $\ket{\xi_i},\ket{\zeta_j}$ for optimal interaction.
\begin{lemma}\label{optGl}
	Optimality conditions for $G_{xy}$ ensure that each $G_{xy}^{\star}(\l)$ is equal to $G_{xy}^{\star}$ and the corresponding optimal value is given by 
	\begin{equation} \label{eq:G_opt} G_{xy}^{\star} = 2\concave{D_{uv}} = G_{xy}^{\star}(\l), ~\forall \l. \end{equation}
\end{lemma}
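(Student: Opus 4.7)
The plan is to exploit the per-outcome bound~\eqref{eq:boundGl_xy} alongside the aggregate bound~\eqref{eq:boundG_xy}, using the strict concavity of $f(z):=2\sqrt{z(1-z)}$ on $[0,1]$ to squeeze every inequality in a single chain into an equality. Optimality furnishes $G_{xy}=G_{xy}^{\star}=2\concave{D_{uv}}=f(D_{uv})$ by Proposition~\ref{condns:optG}, and this will serve as the anchor.

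Concretely, I would write
\begin{equation*}
G_{xy} = \sum_\l q_{xy}(\l)\,G_{xy}(\l) \;\le\; \sum_\l q_{xy}(\l)\, f\!\left(d_{uv}(\l)\right) \;\le\; f\!\Bigl(\textstyle\sum_\l q_{xy}(\l)\,d_{uv}(\l)\Bigr),
\end{equation*}
invoking~\eqref{eq:boundGl_xy} termwise in the first step and Jensen's inequality (with $f$ concave) in the second. Setting $\bar d := \sum_\l q_{xy}(\l)\,d_{uv}(\l)$, optimality together with the aggregate bound $G_{xy}\le f(D_{uv})$ forces $f(\bar d)=f(D_{uv})$, and hence $\bar d=D_{uv}$ on the physically relevant branch $D_{uv}\in[0,\tfrac12]$. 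Strict concavity of $f$ then upgrades the Jensen equality to the pointwise statement that $d_{uv}(\l)$ is the same constant for every $\l$ with $q_{xy}(\l)>0$, and that constant must equal $\bar d = D_{uv}$; this reproduces Eq.~(33) of~\cite{fuchs97} as quoted in Sec.~\ref{postIntrSt}. Finally, equality in the termwise step gives $G_{xy}^{\star}(\l)=f(d_{uv}(\l))=f(D_{uv})=2\concave{D_{uv}}$ for every such $\l$, which is independent of $\l$ and coincides with $G_{xy}^{\star}$.

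The main obstacle I anticipate is the Jensen leg of the chain: since $f$ is concave but not monotone on the whole unit interval, inferring $\bar d = D_{uv}$ from $f(\bar d)=f(D_{uv})$ requires restricting attention to the increasing branch $[0,\tfrac12]$. This restriction is harmless in the BB84 setting since any observable error rate $D_{uv}\ge \tfrac12$ already aborts the protocol and is cryptographically uninteresting. Once this subtlety is dispatched, the remaining implications are a direct tracing-back of the equalities in the displayed chain, so no further technical work is needed.
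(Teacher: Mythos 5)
Your route is genuinely different from the paper's: instead of citing Fuchs' Eq.~(33) for the pointwise condition $d_{uv}(\l)=D_{uv}$, you reconstruct it via a two-step chain and Jensen's inequality for the concave function $f(z)=2\sqrt{z(1-z)}$. That is a more self-contained argument and in principle cleaner. However, the pivotal step is not actually justified as written. Your chain delivers $G_{xy}\le f(\bar d)$ with $\bar d:=\sum_\l q_{xy}(\l)\,d_{uv}(\l)$; combining this with optimality $G_{xy}=f(D_{uv})$ gives only the one-sided inequality $f(D_{uv})\le f(\bar d)$, which does \emph{not} force $f(\bar d)=f(D_{uv})$. You treat the aggregate bound $G_{xy}\le f(D_{uv})$ as an independent second constraint, but that bound is itself derived, in Fuchs \etal, by exactly your chain \emph{plus} the identity $\sum_\l q_{xy}(\l)\,d_{uv}(\l)=D_{uv}$ (their Eq.~(21)). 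Invoking the aggregate bound to "squeeze" $f(\bar d)$ down to $f(D_{uv})$ is therefore circular.

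The fix is to state, before writing the chain, that $\bar d=D_{uv}$ is an identity for any interaction and any POVM $\{E_\l\}$ — it is a total-probability decomposition of Bob's $u$-$v$ error rate over Eve's outcomes, and it holds because $\tfrac12(\rho_x+\rho_y)=\tfrac12(\rho_u+\rho_v)$ for conjugate bases, so the outcome distribution $q_{xy}(\l)$ equals what Eve would see had Alice signalled in $u$-$v$. With that identity in hand, your chain reads
\begin{equation*}
G_{xy} \;=\; \sum_\l q_{xy}(\l)\,G_{xy}(\l) \;\le\; \sum_\l q_{xy}(\l)\,f\!\left(d_{uv}(\l)\right) \;\le\; f(D_{uv}),
\end{equation*}
and optimality collapses both inequalities to equalities. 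Strict concavity of $f$ then gives $d_{uv}(\l)=D_{uv}$ for every $\l$ with $q_{xy}(\l)>0$, and the termwise equality gives $G_{xy}^{\star}(\l)=f(D_{uv})=G_{xy}^{\star}$. Once the identity is used, the worry you raise about the non-monotone branch of $f$ on $[\tfrac12,1]$ disappears — you never need to invert $f$, because $\bar d=D_{uv}$ is known a priori, not deduced. With that repair your proof is correct, and it is a worthwhile elementary rederivation of the fact the paper merely cites.
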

\begin{proof}
	For signal sent in $\xy$ basis, the optimal information gain, by Eq.~\eqref{eq:boundG_xy}, is	
	\begin{eqnarray*} 
	G_{xy}^{\star} &=& 2 \concave{D_{uv}}.
	\end{eqnarray*}
	By Eq.~\eqref{eq:boundGl_xy}, for measurement outcome $\l$ of Eve, 
	\begin{eqnarray*} 
	G_{xy}^{\star}(\l) &=&  2\concavesq{d_{uv}(\l)}
	\end{eqnarray*}
	For optimality, the necessary and sufficient conditions in Proposition~\ref{condns:optG} must be satisfied. By~\cite[Eq.~(33)]{fuchs97}, this requires	
	\begin{eqnarray*} 
	d_{uv}(\l) = D_{uv}, ~\forall \l
	\end{eqnarray*}
	which ensures that the lemma is proved.	
\end{proof}
\begin{note}
Since we consider equal prior probabilities, we use the following working formula of $G_{xy}(\l)$ while we derive the general form of an optimal interaction, 
\begin{eqnarray} \label{eq:IG_lambda_eqprior}
G_{xy}(\l) &=& \left|Q_{x\l} - Q_{y\l}\right| = \frac{\left|P_{\l x} - P_{\l y}\right|}{P_{\l x} + P_{\l y}}.
\end{eqnarray}
\end{note}  
Here we describe an expression of $P_{\l x}, P_{\l y}$ in terms of $\ket{\xi_i},\ket{\zeta_j}$ and a POVM $\{E_\l\}$.
\begin{theorem}
	Given the postinteraction sates~\eqref{eq:entangledExprXY}, and a POVM $\{E_\l\}_{\l\in\{0,1,2,3\}}$,
	\begin{eqnarray} \label{eq:Plx}
	P_{\l x} &=& (1-D_{xy})\braket{\xi_x}{E_\l}^2 + D_{xy}\braket{\zeta_x}{E_\l}^2, \nonumber \\
	P_{\l y} &=& (1-D_{xy})\braket{\xi_y}{E_\l}^2 + D_{xy}\braket{\zeta_y}{E_\l}^2.
	\end{eqnarray} 
\end{theorem}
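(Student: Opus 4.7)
The plan is to substitute the Schmidt-decomposed post-interaction states~\eqref{eq:entangledExprXY} into the probability formula~\eqref{eq:PlxTr} via the reduced density operator~\eqref{eq:rho_x}, carry out the partial trace explicitly, and read off the claimed expression. The identity for $P_{\l y}$ then follows from the same derivation applied to $\ket{Y}$.

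First I would form the operator $\ketbra{X}{X}$ by expanding
\[
\ket{X} = \sqrt{1-D_{xy}}\,\ket{x}\ket{\xi_x} + \sqrt{D_{xy}}\,\ket{y}\ket{\zeta_x}.
\]
This produces the two diagonal pieces $(1-D_{xy})\,\ketbra{x}{x}\otimes\ketbra{\xi_x}{\xi_x}$ and $D_{xy}\,\ketbra{y}{y}\otimes\ketbra{\zeta_x}{\zeta_x}$, together with two off-diagonal cross pieces of the form $\sqrt{D_{xy}(1-D_{xy})}\,\ketbra{x}{y}\otimes\ketbra{\xi_x}{\zeta_x}$ and its Hermitian conjugate. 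Applying $\tr_A$ and using $\braket{y}{x}=\braket{x}{y}=0$ annihilates the cross pieces, leaving the Eve-side mixed state
\[
\rho_x = (1-D_{xy})\,\ketbra{\xi_x}{\xi_x} + D_{xy}\,\ketbra{\zeta_x}{\zeta_x}.
\]

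Next I would insert this into Eq.~\eqref{eq:PlxTr}. The notation $\braket{\xi_x}{E_\l}$ in the statement signals that each POVM element is written in the rank-one form $E_\l=\ketbra{E_\l}{E_\l}$ (the same form that arises in Lemma~\ref{th:optPOVM}), so that $\tr\left(\ketbra{\xi_x}{\xi_x}\,E_\l\right) = \braket{\xi_x}{E_\l}\braket{E_\l}{\xi_x}$, and similarly for $\ket{\zeta_x}$. Invoking the paper's standing convention that the relevant inner products are real (stated immediately above Eq.~\eqref{rel:perp_opt}), both products coincide with $\braket{\xi_x}{E_\l}^2$ and $\braket{\zeta_x}{E_\l}^2$ respectively, yielding the asserted expression for $P_{\l x}$. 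Repeating the identical computation with $\ket{Y} = \sqrt{1-D_{xy}}\,\ket{y}\ket{\xi_y} + \sqrt{D_{xy}}\,\ket{x}\ket{\zeta_y}$ delivers $P_{\l y}$.

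There is essentially no obstacle here: the argument is a one-line partial trace followed by a one-line trace evaluation. The only point worth flagging is that the vanishing of the cross terms rests solely on the orthogonality of $\ket{x}$ and $\ket{y}$ on Alice's side, and does \emph{not} require any orthogonality among the $\ket{\xi_i}$'s and $\ket{\zeta_j}$'s on Eve's side. Consequently the formula is valid for the generic Schmidt form~\eqref{eq:entangledExprXY} and does not presuppose the tighter relation~\eqref{rel:perp_opt} that only holds at optimality.
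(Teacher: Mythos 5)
Your derivation matches the paper's proof step for step: substitute the Schmidt form into Eq.~\eqref{eq:rho_x}, trace out Alice to obtain $\rho_x=(1-D_{xy})\ketbra{\xi_x}{\xi_x}+D_{xy}\ketbra{\zeta_x}{\zeta_x}$, then evaluate $\tr(\rho_x E_\l)$ with rank-one $E_\l=\ketbra{E_\l}{E_\l}$. You simply spell out a few points the paper leaves implicit (vanishing of cross terms under $\tr_A$, the rank-one/real-inner-product conventions), but the route is the same.
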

\begin{proof}
	Using Eq.~\eqref{eq:entangledExprXY} in Eq.~\eqref{eq:rho_x}, we get,
	\begin{eqnarray} \label{expr:rho_x}
	\rho_x &=& \Trpart{A}\left(\ketbra{X}{X}\right) = (1-D_{xy})\wh{\xi}_x + D_{xy}\wh{\zeta}_x, 
	\end{eqnarray}
	where
	$$\wh{\xi}_x := \ketbra{\xi_x}{\xi_x}, \tab \wh{\zeta}_x := \ketbra{\zeta_x}{\zeta_x}.$$
	By Eq.~\eqref{eq:PlxTr},
	\begin{eqnarray*}
	P_{\l x}  &=& \Tr\left(\rho_xE_\l\right) \\
	\ &=& (1-D_{xy})\Tr\left(\wh{\xi}_xE_\l\right) + D_{xy}\Tr\left(\wh{\zeta}_xE_\l\right)\\
	\ &=& (1-D_{xy})\braket{\xi_x}{E_\l}^2 + D_{xy}\braket{\zeta_x}{E_\l}^2.
	\end{eqnarray*}
	Similarly, we can derive an expression for $P_{\l y}$. 
\end{proof}

We now have all the required pieces in place to derive the optimal interactions. First we gauge the difficulty of performing the derivation if we express the interaction vectors in canonical basis $\{\ket{\E_\l}\}$. We notice that the expressions~\mref{eq:Plx} of $P_{\l x}, P_{\l y}$ are dependent on the eigenvectors $\ket{E_\l}$ for which we don't have any easy formulation against an arbitrary interaction expressed in canonical basis. As we will see shortly, this blockage could be tackled if we express the interaction vectors $\ket{\xi_i},\ket{\zeta_j}$ in the orthonormal eigenbasis of the associated Hermitian $\Gamma_{xy}$.

Having understood this way of describing the interaction vectors, we start with a general form~\eqref{eq:xi-zeta-I} of $\ket{\xi_i},\ket{\zeta_j}$ expressed in the associated orthonormal eigenbasis $\{\ket{E_\l}\}$, while abiding by the orthogonality restriction~\eqref{rel:perp_opt}. Subsequently, we plug-in the expression~\eqref{eq:xi-zeta-I} of the interaction vectors into Eq.~\eqref{eq:Plx} to get the probabilities $P_{\l x}, P_{\l y}$. Then we substitute these probabilities into Eq.~\mref{eq:IG_lambda_eqprior} to get values of $G_{xy}(\l)$. Finally, comparing these values with their optimal counterparts in Eq.~\mref{eq:G_opt}, we derive the general form of an optimal interaction $\ket{\xi_i},\ket{\zeta_j}$ realized in eigenbasis $\{\ket{E_\l}\}$. 
	
This way of expressing interaction vectors not only helps us derive the optimal interactions, but, as we will see shortly, all the optimal interactions eventually lead to a unique expression.
\begin{theorem}
	Let $\{\ket{E_\l}\}$ be an orthonormal eigenbasis of $\Gamma_{xy}$ associated with arbitrary interaction vectors $\ket{\xi_i},\ket{\zeta_j}$ in Eq.~\eqref{eq:entangledExprXY} of the postinteraction states while abiding by the orthogonality restriction~\eqref{rel:perp_opt}. Then, for optimal interaction, the general form of $\ket{\xi_i},\ket{\zeta_j}$, described in that eigenbasis becomes 
	\begin{eqnarray} \label{eq:xi-zeta-I-final}
	\ket{\xi_x} &=& \Dp_{uv}~\ket{E_0} + \Dm_{uv}~\ket{E_1}, \nonumber\\
	\ket{\xi_y} &=& \Dm_{uv}~\ket{E_0} + \Dp_{uv}~\ket{E_1}, \nonumber\\
	\ket{\zeta_x} &=& \Dp_{uv}~\ket{E_2} + \Dm_{uv}~\ket{E_3}, \nonumber\\
	\ket{\zeta_y} &=& \Dm_{uv}~\ket{E_2} + \Dp_{uv}~\ket{E_3},
	\end{eqnarray}
	where $\Dp_{uv}, \Dm_{uv}$ are as defined in Eq.~\eqref{rel:DpDm}. 
\end{theorem}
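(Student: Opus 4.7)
The plan is to combine Lemma~\ref{optGl} (which pins down $G_{xy}(\l)=2\concave{D_{uv}}$ for every outcome $\l$) with Lemma~\ref{th:optPOVM} (which identifies the optimal POVM with the eigenprojectors of $\Gamma_{xy}$), and then to solve a small algebraic system after writing the interaction vectors in that very eigenbasis.

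First I would unpack the block structure of $\Gamma_{xy}$. Using Eq.~\eqref{expr:rho_x} and its $y$-analogue, one has
\begin{equation*}
\Gamma_{xy}=\half\bigl[(1-D_{xy})(\wh{\xi}_x-\wh{\xi}_y)+D_{xy}(\wh{\zeta}_x-\wh{\zeta}_y)\bigr].
\end{equation*}
The perpendicularity relation~\eqref{rel:perp_opt} makes $\Span\{\ket{\xi_x},\ket{\xi_y}\}$ and $\Span\{\ket{\zeta_x},\ket{\zeta_y}\}$ mutually orthogonal $2$-dimensional subspaces of Eve's $4$-dimensional probe space, so $\Gamma_{xy}$ is block-diagonal with respect to this splitting. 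Its eigenbasis can therefore be chosen so that $\{\ket{E_0},\ket{E_1}\}\subset\Span\{\ket{\xi_x},\ket{\xi_y}\}$ and $\{\ket{E_2},\ket{E_3}\}\subset\Span\{\ket{\zeta_x},\ket{\zeta_y}\}$.

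Next I would parametrize, invoking the realness assumption stated in Sec.~\ref{postIntrSt},
\begin{align*}
\ket{\xi_x}&=a_0\ket{E_0}+a_1\ket{E_1},&\ket{\xi_y}&=b_0\ket{E_0}+b_1\ket{E_1},\\
\ket{\zeta_x}&=c_2\ket{E_2}+c_3\ket{E_3},&\ket{\zeta_y}&=d_2\ket{E_2}+d_3\ket{E_3},
\end{align*}
with each pair of squared coefficients summing to one. Plugging these into Eq.~\eqref{eq:Plx} with $E_\l=\ketbra{E_\l}{E_\l}$ collapses the probabilities to $P_{0x}=(1-D_{xy})a_0^2$, $P_{0y}=(1-D_{xy})b_0^2$, and similarly for $\l=1,2,3$. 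Substituting into Eq.~\eqref{eq:IG_lambda_eqprior} then gives $G_{xy}(0)=|a_0^2-b_0^2|/(a_0^2+b_0^2)$, plus three analogous expressions.

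The third step is to enforce $G_{xy}(\l)=g$ for every $\l$, where $g:=\Dp_{uv}^2-\Dm_{uv}^2=2\concave{D_{uv}}$ by Eq.~\eqref{rel:calD_uv} and Lemma~\ref{optGl}. In the $\xi$-block, normalization yields $a_0^2-b_0^2=-(a_1^2-b_1^2)$, so $|a_0^2-b_0^2|=|a_1^2-b_1^2|$; combined with $|a_\l^2-b_\l^2|=g(a_\l^2+b_\l^2)$ for $\l=0,1$, this forces $a_0^2+b_0^2=a_1^2+b_1^2=1$ and hence $\{a_0^2,b_0^2\}=\{\Dp_{uv}^2,\Dm_{uv}^2\}$ with $\{a_1^2,b_1^2\}$ taking the opposite assignment (using $\Dp_{uv}^2=\half(1+g)$, $\Dm_{uv}^2=\half(1-g)$). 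The identical argument applied to the $\zeta$-block fixes $c_2,c_3,d_2,d_3$. Any residual sign ambiguity corresponds to the freedom $\ket{E_\l}\mapsto-\ket{E_\l}$ together with relabelings within each block---precisely the rotation freedom of the eigenbasis---and absorbing it into the labeling/sign convention of $\{\ket{E_\l}\}$ reproduces Eq.~\eqref{eq:xi-zeta-I-final}.

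The subtlest point, in my view, is the block-alignment step: if the spectra of the $\xi$-block and the $\zeta$-block happen to coincide, a generic eigenvector of $\Gamma_{xy}$ may mix the two orthogonal subspaces, and one has to appeal explicitly to the freedom within the degenerate eigenspace to re-orthogonalize into eigenvectors that respect the $\xi\oplus\zeta$ splitting. Once that is handled, everything downstream is a two-variable algebraic exercise, and the cross-basis quantity $2\concave{D_{uv}}$ appearing on the right-hand side is imported wholesale from Proposition~\ref{condns:optG} via Lemma~\ref{optGl}, so no fresh analysis of the $\uv$ basis is required.
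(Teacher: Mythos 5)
Your argument is correct and follows essentially the same route as the paper: express the interaction vectors in the eigenbasis of $\Gamma_{xy}$, use the eigenprojector POVM from Lemma~\ref{th:optPOVM} to collapse $P_{\l x},P_{\l y}$ via Eq.~\eqref{eq:Plx}, impose the all-$\l$ equality $G_{xy}(\l)=2\concave{D_{uv}}$ from Lemma~\ref{optGl}, and solve the resulting two-variable system per block. Your explicit handling of the block structure of $\Gamma_{xy}$ (including the degenerate case $D_{xy}=\half$ where a generic eigenbasis can mix the $\xi$- and $\zeta$-subspaces) and of sign/labeling freedom is a modest tightening of what the paper leaves implicit, but it does not change the method.
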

\begin{proof}
	First we need to fix an orthonormal basis to describe $\ket{\xi_i},\ket{\zeta_j}$ following restriction~\eqref{rel:perp_opt}. For that purpose, there is no harm to choose the above eigenbasis to describe $\ket{\xi_i},\ket{\zeta_j}$. Orthogonality restriction~\eqref{rel:perp_opt} is automatically satisfied if we choose $\ket{\xi_i} \in \Span\{\ket{E_0}, \ket{E_1}\}$ and $\ket{\zeta_j} \in \Span\{\ket{E_2}, \ket{E_3}\}$. So the general form of $\ket{\xi_i},\ket{\zeta_j}$ becomes 
	\begin{eqnarray} \label{eq:xi-zeta-I}
	\ket{\xi_x} &=& \sqrt{\a}~\ket{E_0} + \sqrt{1-\a}~\ket{E_1}, \nonumber\\
	\ket{\xi_y} &=& \sqrt{\b}~\ket{E_0} + \sqrt{1-\b}~\ket{E_1}, \nonumber\\
	\ket{\zeta_x} &=& \sqrt{\mu}~\ket{E_2} + \sqrt{1-\mu}~\ket{E_3}, \nonumber\\
	\ket{\zeta_y} &=& \sqrt{\nu}~\ket{E_2} + \sqrt{1-\nu}~\ket{E_3}. 
	\end{eqnarray}
	Using this form of $\ket{\xi_i},\ket{\zeta_j}$ in Eq.~\eqref{eq:Plx}, we find values of $G_{xy}(\l)$ as shown in Table~\ref{tab:Gl}.
	
	\newcolumntype{R}{>{$}r<{$}}
	\newcolumntype{L}{>{$}l<{$}}

	\begin{table*}[htbp]
			\caption{Values of $P_{\l x}, P_{\l y}, G_{xy}(\l)$
				for the general form of $\ket{\xi_i},\ket{\zeta_j}$ as in Eq.~\eqref{eq:xi-zeta-I}.}  
			\label{tab:Gl}
	\renewcommand{\arraystretch}{1.8}
	\begin{tabular*}{\linewidth}{R@{\extracolsep{\fill}}LLL}
		\toprule \rule{0pt}{4.5ex} 
		\l & P_{\l x} & P_{\l y} & G_{xy}(\l) = \dfrac{\left|P_{\l x} - P_{\l y}\right|}{P_{\l x} + P_{\l y}} \\[1.5ex]
		\hline   
		0 & \left(1-D_{xy}\right)\braket{\xi_x}{E_0}^2=\left(1-D_{xy}\right)\a & \left(1-D_{xy}\right)\braket{\xi_y}{E_0}^2=\left(1-D_{xy}\right)\b & \nicefrac{\left|\a-\b\right|}{\left(\a+\b\right)}  \\  
		1 & \left(1-D_{xy}\right)\braket{\xi_x}{E_1}^2=\left(1-D_{xy}\right)(1-\a) & \left(1-D_{xy}\right)\braket{\xi_y}{E_1}^2=\left(1-D_{xy}\right)(1-\b) & \nicefrac{\left|\b-\a\right|}{\left(1-\a+1-\b\right)}  \\ 
		2 & D_{xy}\braket{\zeta_x}{E_2}^2=D_{xy}\m & D_{xy}\braket{\zeta_y}{E_2}^2=D_{xy}\n & \nicefrac{\left|\m-\n\right|}{\left(\m+\n\right)}  \\  
		3 & D_{xy}\braket{\zeta_x}{E_3}^2=D_{xy}(1-\m) & D_{xy}\braket{\zeta_y}{E_3}^2=D_{xy}(1-\n) & \nicefrac{\left|\n-\m\right|}{\left(1-\m+1-\n\right)} \\ 
		\hline \hline  
	\end{tabular*}	
\end{table*}		
	\renewcommand{\arraystretch}{1}

	By Lemma~\ref{optGl} , for optimal $G_{xy}$, the values of $G_{xy}(\l)$ are all equal. 
	Equating $G_{xy}(0), G_{xy}(1)$ in Table~\ref{tab:Gl}, we get,
	\begin{eqnarray*}
		\a+\b=1,  &\quad & G_{xy}(0)=G_{xy}(1)=\left|2\a-1\right|.
	\end{eqnarray*}
	Similarly, equating $G_{xy}(2), G_{xy}(3)$ in Table~\ref{tab:Gl}, we get,
	\begin{eqnarray*}
		\m+\n=1,  &\quad & G_{xy}(2)=G_{xy}(3)=\left|2\m-1\right|.
	\end{eqnarray*}
	Together, equating $G_{xy}(0), G_{xy}(2)$, we get,
	\begin{eqnarray} \label{eq:opt_alpha_rel}
	\m=\a, &~~~ & \n=\b=1-\a.
	\end{eqnarray}
	Thus,
	\begin{eqnarray*}
		G_{xy}^{\star}(0) = \Dp_{uv}^2-\Dm_{uv}^2 = 2\Dp_{uv}^2 -1  = \left|2\a-1\right|
	\end{eqnarray*}
	gives rise to
	\begin{eqnarray} \label{eq:opt_alpha_rt}
	\sqrt{\a} = \Dp_{uv}, &~~~& \sqrt{1-\a} =\Dm_{uv}.
	\end{eqnarray}
	Using Eqs.~\mref{eq:opt_alpha_rt,eq:opt_alpha_rel} in Eq.~\eqref{eq:xi-zeta-I}, we get a generic form for optimal $\ket{\xi_i},\ket{\zeta_j}$ as in Eq.~\eqref{eq:xi-zeta-I-final}.  
\end{proof} 
Analogous to Eq.~\eqref{eq:xi-zeta-I-final}, a set of optimal interaction vectors exist in $\uv$ basis.

The most interesting thing with the expression~\eqref{eq:xi-zeta-I-final} of the optimal interaction vectors is that it has a unique form capturing all the  optimal interactions while realized in the orthonormal eigenbasis of the associated $\Gamma_{xy}.$

 \begin{remark}
 	An optimal interaction for equal error rates could be described by an expression analogous to  Eq.~\eqref{eq:xi-zeta-I-final} while $\Dp_{uv}, \Dm_{uv}$ are replaced by $\Dp,\Dm$ respectively.
 \end{remark}
\begin{remark}
	We can rewrite Eq.~\eqref{eq:xi-zeta-I-final} as below.
	\begin{eqnarray} \label{eq:xi-zeta-I-final:D}
	\ket{\xi_x} &=& \sqrt{1-D_{uv}}~\tE{0}+\sqrt{D_{uv}}~\tE{1},\nonumber\\
	\ket{\xi_y} &=& \sqrt{1-D_{uv}}~\tE{0}-\sqrt{D_{uv}}~\tE{1}, \nonumber\\
	\ket{\zeta_x} &=& \sqrt{1-D_{uv}}~\tE{2}+\sqrt{D_{uv}}~\tE{3}, \nonumber\\
	\ket{\zeta_y} &=& \sqrt{1-D_{uv}}~\tE{2}-\sqrt{D_{uv}}~\tE{3},  
	\end{eqnarray}
	where
	\begin{eqnarray} \label{eq:basis_E-tilde}
	\tE{0} = \invsqrttwo\left(\ket{E_0} + \ket{E_1}\right), &~&
	\tE{1} = \invsqrttwo\left(\ket{E_0} - \ket{E_1}\right), \nonumber\\
	\tE{2} = \invsqrttwo\left(\ket{E_2} + \ket{E_3}\right), &~&
	\tE{3} = \invsqrttwo\left(\ket{E_2} - \ket{E_3}\right).  \nonumber\\
	\ && \
	\end{eqnarray}
	is another orthonormal basis (called, Bell basis), written in terms of an optimal eigenbasis $\{E_\l\}$. Clearly, these form to describe $\ket{\xi_i},\ket{\zeta_j}$ is analogous to Eqs.~(51) and (50), respectively, as in \cite{fuchs97}. 
\end{remark}
As we will see in the next subsections, expression~\eqref{eq:xi-zeta-I-final} hides a family of optimal interactions via the eigenbasis $\{\ket{E_\l}\}$ -- we can unfold it once we identify the associated optimal POVMs. Since an optimal POVM corresponds to some $\Gamma_{xy}$, we need to find the expression of $\Gamma_{xy}$ realizing interactions~\eqref{eq:xi-zeta-I-final}.

\begin{theorem}
	For an optimal interaction~\mref{eq:xi-zeta-I-final,eq:entangledExprXY}, and its optimal POVM $\{E_\l\}$,
	\begin{eqnarray} \label{eq:Gxy:inEigenbasis}	
	\Scale[0.98]{
	\Gamma_{xy} \!=\! \half(\Dp_{uv}^2\!-\!\Dm_{uv}^2)[(1\!-\!D_{xy})(\EE_{00}\!-\!\EE_{11}) \!+\! D_{xy}(\EE_{22}\!-\!\EE_{33})],} \nonumber\\
	\end{eqnarray}
	where  \qquad\qquad\qquad $\EE_{ij} := \ketbra{E_i}{E_j}.$      
\end{theorem}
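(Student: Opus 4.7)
The plan is to directly substitute the optimal interaction vectors from Eq.~\eqref{eq:xi-zeta-I-final} into the defining expression for $\Gamma_{xy}$ and observe the cancellation of cross terms. By Eq.~\eqref{eq:gamma_eq}, $\Gamma_{xy} = \half(\rho_x - \rho_y)$, and by Eq.~\eqref{expr:rho_x} applied to both signals, $\rho_x = (1-D_{xy})\wh{\xi}_x + D_{xy}\wh{\zeta}_x$ and $\rho_y = (1-D_{xy})\wh{\xi}_y + D_{xy}\wh{\zeta}_y$, where $\wh{\xi}_i := \ketbra{\xi_i}{\xi_i}$ and $\wh{\zeta}_j := \ketbra{\zeta_j}{\zeta_j}$. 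Hence
\begin{equation*}
\Gamma_{xy} = \half\bigl[(1-D_{xy})(\wh{\xi}_x - \wh{\xi}_y) + D_{xy}(\wh{\zeta}_x - \wh{\zeta}_y)\bigr].
\end{equation*}
So the task reduces to computing the two differences in the eigenbasis.

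Next, I would expand $\wh{\xi}_x$ and $\wh{\xi}_y$ using Eq.~\eqref{eq:xi-zeta-I-final}. Since $\ket{\xi_x}$ and $\ket{\xi_y}$ are obtained from one another by swapping the coefficients $\Dp_{uv}$ and $\Dm_{uv}$ on $\ket{E_0}$ and $\ket{E_1}$, the diagonal parts of $\wh{\xi}_x$ and $\wh{\xi}_y$ interchange (producing $\Dp_{uv}^2$ versus $\Dm_{uv}^2$ on $\EE_{00}$ and $\EE_{11}$), while the off-diagonal terms $\EE_{01}$ and $\EE_{10}$ enter with the same coefficient $\Dp_{uv}\Dm_{uv}$ in both and therefore drop out of the difference. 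One obtains $\wh{\xi}_x - \wh{\xi}_y = (\Dp_{uv}^2 - \Dm_{uv}^2)(\EE_{00} - \EE_{11})$, and by the identical argument in the complementary 2D subspace $\wh{\zeta}_x - \wh{\zeta}_y = (\Dp_{uv}^2 - \Dm_{uv}^2)(\EE_{22} - \EE_{33})$.

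Finally, I would substitute these two expressions back into the formula for $\Gamma_{xy}$, factor out the common scalar $(\Dp_{uv}^2 - \Dm_{uv}^2)$, and read off Eq.~\eqref{eq:Gxy:inEigenbasis}. There is no real obstacle here: the identities $\Dp_{uv}^2 + \Dm_{uv}^2 = 1$ from Eq.~\eqref{rel:calD_uv} are not even needed for the final form, only the swap symmetry between $\ket{\xi_x}\leftrightarrow\ket{\xi_y}$ and $\ket{\zeta_x}\leftrightarrow\ket{\zeta_y}$ in the chosen eigenbasis, which guarantees that the off-diagonal cross terms cancel. The only point that deserves a brief word of emphasis is \emph{why} the basis $\{\ket{E_\l}\}$ in Eq.~\eqref{eq:xi-zeta-I-final} is the same basis that appears in $\Gamma_{xy}$: it is so by construction, since Eq.~\eqref{eq:xi-zeta-I-final} was derived precisely by expanding the optimal interaction vectors in the eigenbasis of the associated $\Gamma_{xy}$ guaranteed by Lemma~\ref{th:optPOVM}. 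Hence the resulting $\Gamma_{xy}$ is automatically diagonal in $\{\ket{E_\l}\}$, which is consistent with the fact that the projectors $\{\EE_{\l\l}\}$ form the optimal POVM.
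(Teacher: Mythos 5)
Your proposal is correct and follows essentially the same route as the paper's proof: expand $\Gamma_{xy}=\half(\rho_x-\rho_y)$ via Eq.~\eqref{expr:rho_x}, compute $\wh{\xi}_x-\wh{\xi}_y$ and $\wh{\zeta}_x-\wh{\zeta}_y$ in the eigenbasis, and observe that the $\Dp_{uv}\Dm_{uv}$ cross terms cancel. Your framing of the cancellation as a swap symmetry between $\ket{\xi_x}\leftrightarrow\ket{\xi_y}$ is a small but welcome clarification of what the paper simply writes out term by term, and your closing remark on self-consistency (the same $\{\ket{E_\l}\}$ appearing in the interaction and in $\Gamma_{xy}$) correctly anticipates a point the paper defers to later subsections.
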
 
\begin{proof}
	By Eq.~\eqref{expr:rho_x} and its analogue for signal $y$, 
	\begin{eqnarray*}
		2\Gamma_{xy} &=& \rho_x-\rho_y = (1-D_{xy})\left(\wh{\xi}_x-\wh{\xi}_y\right) + D_{xy}\left(\wh{\zeta}_x-\wh{\zeta}_y\right).
	\end{eqnarray*}
	Using expressions of $\ket{\xi_i}, \ket{\zeta_j}$ in Eq.~\eqref{eq:xi-zeta-I-final}, we get,
	\begin{eqnarray*}
		\wh{\xi}_x &=& \Dp_{uv}^2~\EE_{00} +\Dm_{uv}^2~\EE_{11} + 2\Dp_{uv}\Dm_{uv}\left(\EE_{01}+\EE_{10}\right), \nonumber \\
		\wh{\xi}_y &=& \Dm_{uv}^2~\EE_{00} +\Dp_{uv}^2~\EE_{11} + 2\Dp_{uv}\Dm_{uv}\left(\EE_{01}+\EE_{10}\right), \nonumber \\
		\wh{\zeta}_x &=& \Dp_{uv}^2~\EE_{22} +\Dm_{uv}^2~\EE_{33} + 2\Dp_{uv}\Dm_{uv}\left(\EE_{23}+\EE_{32}\right), \nonumber \\
		\wh{\zeta}_y &=& \Dm_{uv}^2~\EE_{22} +\Dp_{uv}^2~\EE_{33} + 2\Dp_{uv}\Dm_{uv}\left(\EE_{23}+\EE_{32}\right). \nonumber 
	\end{eqnarray*}
	which leads to the desired form of $\Gamma_{xy}$.
\end{proof}
\begin{remark} 
	Clearly, $\Gamma_{xy}$ in~\eqref{eq:Gxy:inEigenbasis} is diagonalized by its eigenbasis while its eigenvalues are	       
	\begin{eqnarray} \label{eq:Eigenval}
	\gamma_0 = \half\left(\Dp_{uv}^2-\Dm_{uv}^2\right)\left(1-D_{xy}\right), &~~& \gamma_1 = -\gamma_0,\nonumber \\
	\gamma_2 = \half\left(\Dp_{uv}^2-\Dm_{uv}^2\right)D_{xy}, &~~& \gamma_3 = -\gamma_2. 
	 \end{eqnarray}   
\end{remark}
It is worth to note here that, for interactions~\eqref{eq:xi-zeta-I-final}, the optimal value $\Dp_{uv}^2-\Dm_{uv}^2$ of $G_{xy}$ in Eq.~\eqref{eq:boundG_xy} agrees with the upper bound $\sum\limits_{\l} \left|\gamma_\l\right|$ of $G_{xy}$ in Lemma~\ref{th:optPOVM}.


As a first step towards identifying the optimal POVMs hiding behind the interactions~\eqref{eq:xi-zeta-I-final}, we connect them with the known instances~\mref{eq:opt_xi-zeta,eq:choice:xi_zeta:opt}. 
	
In Eq.~\eqref{eq:xi-zeta-I-final}, a mere replacement of the eigenbasis $\{\ket{E_\l}\}$ by the canonical basis $\{\ket{\E_\l}\}$ leads to the interaction~\mref{eq:opt_xi-zeta}, except for a trivial permutation of the basis vectors. The corresponding $\Gamma_{xy}$ in Eq.~\eqref{eq:Gxy:inEigenbasis} becomes
	\begin{eqnarray*} 	
	\Scale[0.98]{
		\Gamma_{xy} \!=\! \half(\Dp_{uv}^2\!-\!\Dm_{uv}^2)[(1\!-\!D_{xy})(\E_{00}\!-\!\E_{11}) \!+\! D_{xy}(\E_{22}\!-\!\E_{33})],} 
	\end{eqnarray*}
where $\E_{ii}$	stands for $\ketbra{\E_{i}}{\E_{i}}$. Clearly, the canonical basis $\{\ket{\E_\l}\}$ diagonalizes this $\Gamma_{xy}$ and therefore constitutes the optimal POVM~\eqref{eq:optEl} for the said interaction. The diagonalization worked due to orthonormality of the canonical basis. 

Connecting Eq.~\eqref{eq:xi-zeta-I-final} to the instance~\mref{eq:choice:xi_zeta:opt} is done in the following subsection.

\subsection{Optimal POVM for the specific interaction for equal error rates ($D_{xy}=D_{uv}=D$) by Fuchs \etal. \cite{fuchs97} \label{optPOVM_eqD}}
For equal error rates, i.e., $D_{xy}=D_{uv}=D$, \cite{fuchs97} described a choice of $\ket{\xi_i}, \ket{\zeta_j}$, that optimizes $I$ (and therefore $G$). For this optimal $\ket{\xi_i}, \ket{\zeta_j}$ as described in Eq.~\eqref{eq:choice:xi_zeta:opt}, we now derive the optimal POVM that was not shown in~\cite{fuchs97}. To see how Eq.~\eqref{eq:xi-zeta-I-final} produces Eq.~\eqref{eq:choice:xi_zeta:opt}, one can simply compare the respective interaction vectors in each of these equations. The comparison gives rise to a set of vectors $\{\ket{E_\l}\}$, which constitutes an optimal POVM if it diagonalizes the corresponding $\Gamma_{xy}$.
\begin{theorem}\label{th:POVM-Fuchs}
	Consider a canonical basis for Eve as given in Eq.~\eqref{eq:basis_cano_Eve}. 
	For the optimal interaction~\eqref{eq:choice:xi_zeta:opt}, the optimal POVM $\{E_\l\}$ could be given by
	$$\El = \ketbra{\El}{\El},$$
	where 
	\begin{eqnarray} \label{eq:optimal_POVM-vec:eqErr}
	\ket{E_0} = \Dp\ket{\E_0}-\Dm\ket{\E_1}, & \ & \ket{E_1} = \Dm\ket{\E_0}+\Dp\ket{\E_1}, \nonumber\\
	\ket{E_2} = \Dp\ket{\E_2}-\Dm\ket{\E_3}, & \ & \ket{E_3} = \Dm\ket{\E_2}+\Dp\ket{\E_3}.\nonumber\\
	\end{eqnarray} 
\end{theorem}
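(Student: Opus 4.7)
My plan is to invoke the uniqueness form of Eq.~\eqref{eq:xi-zeta-I-final}: in any orthonormal eigenbasis $\{\ket{E_\l}\}$ of the associated $\Gamma_{xy}$, an optimal interaction must take the canonical form with coefficients $\Dp,\Dm$ (with $D_{uv}$ replaced by $D$ for equal error rates). Since the interaction~\eqref{eq:choice:xi_zeta:opt} is already known to be optimal, matching its canonical-basis description against the canonical form in the \emph{unknown} eigenbasis will produce explicit linear formulas for each $\ket{E_\l}$ in terms of $\{\ket{\E_\l}\}$. By Lemma~\ref{th:optPOVM}, the projectors onto this eigenbasis are then the optimal POVM, provided the recovered vectors are orthonormal and actually diagonalize $\Gamma_{xy}$.

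Concretely, equating the $\xi$-block gives the $2\times 2$ system
\[
\Dp\ket{E_0}+\Dm\ket{E_1}=\ket{\E_0},\qquad \Dm\ket{E_0}+\Dp\ket{E_1}=2\Dp\Dm\ket{\E_0}+(\Dp^2-\Dm^2)\ket{\E_1},
\]
whose coefficient matrix has nonzero determinant $\Dp^2-\Dm^2=2\concave{D}$ for $D<\tfrac12$. Inverting it, and using the identities $\Dp^2+\Dm^2=1$ and $1-2\Dm^2=\Dp^2-\Dm^2$ that follow from Eq.~\eqref{rel:calD_uv}, yields $\ket{E_0}=\Dp\ket{\E_0}-\Dm\ket{\E_1}$ and $\ket{E_1}=\Dm\ket{\E_0}+\Dp\ket{\E_1}$. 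The analogous matching of $\ket{\zeta_x},\ket{\zeta_y}$ inside $\Span\{\ket{\E_2},\ket{\E_3}\}$ produces $\ket{E_2},\ket{E_3}$ of the same shape, which is precisely Eq.~\eqref{eq:optimal_POVM-vec:eqErr}.

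It then remains to verify orthonormality and the eigenbasis property. Orthonormality is immediate: each recovered $2\times 2$ block is an $SO(2)$ rotation, and the two blocks live in orthogonal subspaces of the canonical basis. For the eigenbasis property I would either compute $\Gamma_{xy}=\tfrac12(\rho_x-\rho_y)$ directly from Eq.~\eqref{expr:rho_x} applied to~\eqref{eq:choice:xi_zeta:opt} and check $\Gamma_{xy}\ket{E_\l}=\gamma_\l\ket{E_\l}$ against the eigenvalues in Eq.~\eqref{eq:Eigenval}, or, more economically, invoke the fact that Eq.~\eqref{eq:Gxy:inEigenbasis} is diagonal in the basis produced by the matching, so that consistency of the matching automatically supplies the diagonalization. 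The main (and only) delicate point is this consistency check between the two independent $2\times 2$ blocks, but the block-diagonal structure of $\Gamma_{xy}$ in the $\{\ket{\E_0},\ket{\E_1}\}\oplus\{\ket{\E_2},\ket{\E_3}\}$ decomposition makes it routine, so no genuine difficulty arises.
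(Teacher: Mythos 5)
Your proposal is correct and takes essentially the same approach as the paper: you match Eq.~\eqref{eq:choice:xi_zeta:opt} against the general form Eq.~\eqref{eq:xi-zeta-I-final}, solve the resulting $2\times 2$ systems for $\ket{E_0},\ket{E_1}$ and $\ket{E_2},\ket{E_3}$, and then check orthonormality and the eigenbasis property. The paper leaves the algebraic inversion implicit and defers the diagonalization check to Theorem~\ref{th:orthonormality}, so your explicit treatment of the determinant, the identities from Eq.~\eqref{rel:calD_uv}, and the block $SO(2)$ structure is simply a slightly fuller write-up of the same argument.
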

\begin{proof} Comparing a special form of $\ket{\xi_x}, \ket{\xi_y}$ given by Eq.~\eqref{eq:choice:xi_zeta:opt} and the general form of $\ket{\xi_x}, \ket{\xi_y}$ described in Eq.~\eqref{eq:xi-zeta-I-final} but for equal error rates, we get
	\begin{eqnarray*} 
	\Dp~\ket{E_0} + \Dm~\ket{E_1} &=& \ket{\E_0}, \\ 
	\Dm~\ket{E_0} + \Dp~\ket{E_1} &=& 2\Dp\Dm~\ket{\E_0}+\left(\Dp^2-\Dm^2\right)\ket{\E_1}. 
	\end{eqnarray*}
	Solving for $\ket{E_0}$ and $\ket{E_1}$, one should arrive at the first two expressions of Eq.~\eqref{eq:optimal_POVM-vec:eqErr}.
	The remaining two expressions in Eq.~\eqref{eq:optimal_POVM-vec:eqErr} could be derived by comparing the expressions of $\ket{\zeta_x}, \ket{\zeta_y}$ in Eqs.~\mref{eq:choice:xi_zeta:opt,eq:xi-zeta-I-final}.  That these vectors diagonalize the associated $\Gamma_{xy}$ is proven in the following theorem.
\end{proof}
While realized in the basis~\eqref{eq:optimal_POVM-vec:eqErr}, the expression~\eqref{eq:Gxy:inEigenbasis} of $\Gamma_{xy}$ gets diagonalized by the same basis vectors -- the diagonalization works because the vectors under consideration are orthonormal. Formally speaking,
\begin{theorem}\label{th:orthonormality}
For $\Gamma_{xy}$ described in Eq.~\eqref{eq:Gxy:inEigenbasis} but for equal error rates and realized in the basis~\eqref{eq:optimal_POVM-vec:eqErr}, $$\Gamma_{xy}\ket{E_0} = \gamma_0\ket{E_0},$$
where $\gamma_0 = \concave{D}(1-D) = \half\left(\Dp^2-\Dm^2\right)(1-D)$ for equal error rates.
\end{theorem}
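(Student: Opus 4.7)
The plan is to reduce the claim to the observation that Eq.~\eqref{eq:Gxy:inEigenbasis} is already diagonal with respect to any orthonormal set $\{\ket{E_\l}\}$ appearing in the projectors $\EE_{ij}=\ketbra{E_i}{E_j}$. Hence the only nontrivial step is verifying that the four explicit vectors specified by Eq.~\eqref{eq:optimal_POVM-vec:eqErr} genuinely form an orthonormal basis; once that is done, the eigenvalue relation $\Gamma_{xy}\ket{E_0}=\gamma_0\ket{E_0}$ falls out of a direct substitution.

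First I would verify orthonormality of $\{\ket{E_\l}\}$. Using $\braket{\E_i}{\E_j}=\delta_{ij}$ together with $\Dp^2+\Dm^2=1$ from Eq.~\eqref{rel:calD_uv} (specialized to equal error rates), one checks $\braket{E_0}{E_0}=\Dp^2+\Dm^2=1$ and $\braket{E_0}{E_1}=\Dp\Dm-\Dm\Dp=0$, and the same identities handle the pair $\{\ket{E_2},\ket{E_3}\}$. All cross inner products between $\{\ket{E_0},\ket{E_1}\}$ and $\{\ket{E_2},\ket{E_3}\}$ vanish automatically, because the two pairs live in the orthogonal canonical subspaces $\Span\{\ket{\E_0},\ket{\E_1}\}$ and $\Span\{\ket{\E_2},\ket{\E_3}\}$.

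Second, I would plug in. Orthonormality immediately gives $\EE_{ii}\ket{E_0}=\braket{E_i}{E_0}\ket{E_i}=\delta_{i0}\ket{E_0}$, so in Eq.~\eqref{eq:Gxy:inEigenbasis} with $D_{xy}=D_{uv}=D$ only the term $(1-D)\EE_{00}$ survives when acting on $\ket{E_0}$. This yields
$$\Gamma_{xy}\ket{E_0}=\half\left(\Dp^2-\Dm^2\right)(1-D)\ket{E_0},$$
which is exactly $\gamma_0\ket{E_0}$ after applying the identity $\Dp^2-\Dm^2=2\concave{D}$ from Eq.~\eqref{rel:calD_uv}. The eigenrelations for $\ket{E_1},\ket{E_2},\ket{E_3}$ follow by the same one-line substitution and reproduce the eigenvalues listed in Eq.~\eqref{eq:Eigenval}.

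The main point—barely an obstacle—is recognizing that no active diagonalization is required here: writing $\Gamma_{xy}$ in the form of Eq.~\eqref{eq:Gxy:inEigenbasis} already encodes its spectral decomposition, and the substantive content of the theorem is simply that the specific choice in Eq.~\eqref{eq:optimal_POVM-vec:eqErr} is an admissible orthonormal eigenbasis. The two identities $\Dp^2+\Dm^2=1$ and $\Dp^2-\Dm^2=2\concave{D}$ reduce the orthonormality check to a handful of pairwise computations, after which the eigenvalue equation is immediate.
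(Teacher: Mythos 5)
Your argument is correct and matches the paper's own proof: both reduce the claim to the observation that, once the vectors in Eq.~\eqref{eq:optimal_POVM-vec:eqErr} are orthonormal, the form of Eq.~\eqref{eq:Gxy:inEigenbasis} is already spectral and acting on $\ket{E_0}$ picks out the $(1-D)\EE_{00}$ term. The only difference is that you spell out the orthonormality check via $\Dp^2+\Dm^2=1$, which the paper merely asserts.
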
  
\begin{proof}
	For equal error rates, Eq.~\eqref{eq:Gxy:inEigenbasis} becomes
	\begin{eqnarray*} 
		\Gamma_{xy} = \half(\Dp^2-\Dm^2)\left[\left(1-D\right)\left(\EE_{00}-\EE_{11}\right) + D\left(\EE_{22}-\EE_{33}\right)\right].
	\end{eqnarray*}
	If we realize this expression in the basis~\eqref{eq:optimal_POVM-vec:eqErr}, then
	\begin{eqnarray*}
		\Gamma_{xy}\ket{E_0} 
		&=& \half\left(\Dp^2-\Dm^2\right)\left(1-D\right)\ket{E_0} = \gamma_0\ket{E_0},
	\end{eqnarray*}
	so far $\{\ket{E_i}\}$ are orthonormal, which indeed is true for Eq.~\eqref{eq:optimal_POVM-vec:eqErr}.
	This completes the proof for $\l=0$. 
\end{proof}
\begin{remark}
	One can calculate and check that the eigenvalues of $\Gamma_{xy}$ described in eigenbasis~\eqref{eq:optimal_POVM-vec:eqErr} match those as in Eq.~\eqref{eq:Eigenval} calculated for the generic form of optimal $\ket{\xi_i}, \ket{\zeta_j}$ described by Eq.~\eqref{eq:xi-zeta-I-final} but for equal error rates.
\end{remark}

Connecting interactions~\eqref{eq:xi-zeta-I-final} to the known instances~\mref{eq:opt_xi-zeta,eq:choice:xi_zeta:opt} helped develop the intuition for finding the family of optimal POVMs hidden behind interactions~\eqref{eq:xi-zeta-I-final}. Now we are in a position to pinpoint the bases for which $\Gamma_{xy}$ in Eq.~\eqref{eq:Gxy:inEigenbasis} gets diagonalized.

\subsection{Optimal POVM corresponding to an optimal interaction in its generic form\label{optPOVM:form}}

In Theorem~\ref{th:POVM-Fuchs}, it was shown that a specific rotation of the canonical basis $\{\ket{\E_\l}\}$ yields an eigenbasis $\{\ket{E_\l}\}$ of $\Gamma_{xy}$ that corresponds to the optimal interaction~\eqref{eq:xi-zeta-I-final}. Now, we will show that not only the above specific rotation, but any rotation represented by an orthogonal linear transformation of the canonical basis $\{\ket{\E_\l}\}$ yields an eigenbasis of $\Gamma_{xy}$ in Eq.~\eqref{eq:Gxy:inEigenbasis}. 

Given an interaction, an optimal POVM corresponds to an orthonormal basis $\{\ket{E_\l}\}$ that diagonalizes $\Gamma_{xy}$ associated with that interaction. For interactions~\eqref{eq:xi-zeta-I-final}, the associated $\Gamma_{xy}$ is given by Eq.~\eqref{eq:Gxy:inEigenbasis}. Our task is to identify the bases, each of which diagonalizes the corresponding $\Gamma_{xy}$ realized in the same 
basis. First we observe that, for a set $\{\ket{E_\l}\}$ of vectors, and for projectors $\EE_{ii}$ in Eq.~\eqref{eq:Gxy:inEigenbasis}, 
\vspace{-1em}\[ \EE_{ii}\ket{E_j} = \delta_{ij}\ket{E_i}, \text{~~with~} \delta_{ij} \text{~the Kronecker delta,~}\] 
	if and only if the vectors $\ket{E_\l}$ are orthonormal.
In such case, it is guaranteed that the $\Gamma_{xy}$ in Eq.~\eqref{eq:Gxy:inEigenbasis} gets diagonalized by the basis $\{\ket{E_\l}\}$, because
	\begin{eqnarray*} \Gamma_{xy}\ket{E_\l} = \gamma_\l\ket{E_\l}, ~\forall \l, \end{eqnarray*}
where the values of $\gamma_\l$ coincides with those in Eq.~\eqref{eq:Eigenval}.

Since any rotation $\rot$ of the canonical basis $\{\ket{\E_\l}\}$ produces an orthonormal basis, it diagonalizes the $\Gamma_{xy}$ of Eq.~\eqref{eq:Gxy:inEigenbasis} realized in the same rotated basis. Hence, each of these rotated basis constitutes the optimal POVM for an interaction~\eqref{eq:xi-zeta-I-final} realized in the same rotated basis.

To express the above idea mathematically, we introduce the notations below:
\begin{eqnarray*}
	\vegn := (\ket{E_0},\ket{E_1},\ket{E_2},\ket{E_3})^T, &~~& \vcan := (\ket{\E_0},\ket{\E_1},\ket{\E_2},\ket{\E_3})^T,
\end{eqnarray*} 
and state the following result.
\begin{theorem}
	Any orthogonal rotation $\rot$ of the canonical basis $\{\ket{\E_\l}\}$, which is realized by 
	\begin{eqnarray} \label{eq:optimal_POVM-vec}
	\vegn = \rot\vcan,
	\end{eqnarray}
	works as an orthonormal eigenbasis of $\Gamma_{xy}$ in Eq.~\eqref{eq:Gxy:inEigenbasis} while the optimal interaction~\eqref{eq:xi-zeta-I-final} is realized in the same rotated basis. To be precise, the rotated basis diagonalizes $\Gamma_{xy}$ in this case. 
\end{theorem}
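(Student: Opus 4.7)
The plan is to reduce the theorem to the observation made just before its statement: the form~\eqref{eq:Gxy:inEigenbasis} of $\Gamma_{xy}$ is written abstractly as a linear combination of the rank-one projectors $\EE_{ii}$, and its diagonalization requires only that the tuple $\{\ket{E_\l}\}$ into which the optimal interaction~\eqref{eq:xi-zeta-I-final} is resolved be orthonormal. Accordingly, I would break the argument into two steps: first establish that $\vegn=\rot\vcan$ is an orthonormal tuple of vectors, and then verify that $\Gamma_{xy}$, when built from the interaction vectors expressed in this rotated basis, is diagonalized by it.

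For the first step, since $\rot$ is by hypothesis an orthogonal linear transformation and $\{\ket{\E_\l}\}$ is itself orthonormal, $\rot$ acts as an isometry on the $4$-dimensional Hilbert space of Eve's probe. A one-line inner-product calculation then gives $\braket{E_i}{E_j}=(\rot\rot^T)_{ij}=\delta_{ij}$, so that $\{\ket{E_\l}\}$ is again orthonormal. This immediately yields the key identity $\EE_{ii}\ket{E_j}=\delta_{ij}\ket{E_i}$ highlighted just before the statement of the theorem.

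For the second step, I would apply the operator $\Gamma_{xy}$ of Eq.~\eqref{eq:Gxy:inEigenbasis} to each rotated basis vector $\ket{E_\l}$, term by term, using the projector identity above. This directly gives $\Gamma_{xy}\ket{E_\l}=\gamma_\l\ket{E_\l}$, with eigenvalues $\gamma_\l$ exactly matching those listed in Eq.~\eqref{eq:Eigenval}. Invoking Lemma~\ref{th:optPOVM}, the projectors $\El=\ketbra{E_\l}{E_\l}$ then constitute an optimal POVM for the interaction~\eqref{eq:xi-zeta-I-final} realized in the rotated basis.

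The substantive difficulty is conceptual rather than computational: one must keep in mind that the expressions~\eqref{eq:xi-zeta-I-final} and~\eqref{eq:Gxy:inEigenbasis} are written in terms of an abstract orthonormal basis $\{\ket{E_\l}\}$, so the family of optimal POVMs is parametrised precisely by the orthogonal group acting on the canonical basis $\{\ket{\E_\l}\}$. Once this coordinate freedom is recognised, the two pieces above are essentially bookkeeping; the content of the theorem is the assertion that every such rotated basis -- not merely the specific choices made in~\cite{fuchs97} via Eqs.~\eqref{eq:optEl} and~\eqref{eq:optimal_POVM-vec:eqErr} -- furnishes a valid optimal POVM, thereby capturing the full uniqueness-up-to-rotation claim advertised in the title.
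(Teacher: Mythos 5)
Your proposal is correct and follows essentially the same route as the paper: the paper's argument (spread in the paragraphs surrounding the theorem statement rather than in a formal proof environment) also hinges on the observation that $\EE_{ii}\ket{E_j}=\delta_{ij}\ket{E_i}$ holds precisely when $\{\ket{E_\l}\}$ is orthonormal, that orthogonality of $\rot$ preserves orthonormality of the basis, and that this immediately yields $\Gamma_{xy}\ket{E_\l}=\gamma_\l\ket{E_\l}$ with the eigenvalues of Eq.~\eqref{eq:Eigenval}. Your write-up merely makes explicit the one-line inner-product computation $\braket{E_i}{E_j}=(\rot\rot^T)_{ij}=\delta_{ij}$, which the paper leaves implicit.
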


Note that in Eq.~\eqref{eq:optimal_POVM-vec}, orthonormality of the eigenbasis is preserved only when $\rot$ is an orthogonal matrix, and not any arbitrary linear transformation. 
Thus, for any orthogonal rotation $\rot$, the orthonormal basis $\{\ket{E_\l}\}$ realized by Eq.~\eqref{eq:optimal_POVM-vec} corresponds to an optimal POVM while the optimal interaction~\eqref{eq:xi-zeta-I-final} is also realized in the same rotated basis. To see how these optimal POVMs help Eq.~\eqref{eq:xi-zeta-I-final} to generate the family of optimal interactions, we let the coefficient matrix of Eq.~\eqref{eq:xi-zeta-I-final} as $\matd$. Then, an optimal POVM due to $\vegn = \rot\vcan$ leads to an instantiation $\matd\vegn=\matd\rot\vcan$ of the optimal interaction. Thus, Eqs.~\eqref{eq:xi-zeta-I-final} and~\eqref{eq:optimal_POVM-vec} establishes a one-to-one correspondence between an optimal interaction $\matd\rot\vcan$ and the optimal POVM realizing $\rot\vcan$. Fixing a rotation $\rot$ provides an instance of such pairs $\left(\matd\rot\vcan, ~\rot\vcan\right)$.
Here we create a subclass of such orthogonal rotation. 
\begin{example}\label{ex1}
	Consider a canonical basis $\{\ket{\E_\l}\}$ for Eve as given in Eq.~\eqref{eq:basis_cano_Eve}. Let 
	\begin{eqnarray} \label{eq:optimal_POVM-vec_exmpl}
	\ket{E_0} = \sqrt{a}\ket{\E_0}-\sqrt{1-a}\ket{\E_1}, \nonumber \\
  \ket{E_1} = \sqrt{1-a}\ket{\E_0}+\sqrt{a}\ket{\E_1}, \nonumber \\
	\ket{E_2} = \sqrt{a}\ket{\E_2}-\sqrt{1-a}\ket{\E_3}, \nonumber \\
 \ket{E_3} = \sqrt{1-a}\ket{\E_2}+\sqrt{a}\ket{\E_3}.
	\end{eqnarray}
	Since the coefficient matrix is orthogonal, $\{\ket{E_\l}\}_{\l\in\{0,1,2,3\}}$ forms an orthonormal eigenbasis for $\Gamma_{xy}$. 
\end{example}

\subsection{Achieving both optimal information gain (G) and optimal mutual information (I) \label{opttimality:check}}
Identification of the optimal POVMs helped to unfold the family of optimal interactions that was hiding behind the unique expression~\eqref{eq:xi-zeta-I-final}. Equations~\eqref{eq:xi-zeta-I-final} and~\eqref{eq:optimal_POVM-vec}, when considered together, provides a family of interaction-POVM pairs $\left(\matd\rot\vcan, ~\rot\vcan\right)$. Such a pair along with their counterpart in $\uv$ basis, by virtue of our construction of optimal interactions~\eqref{eq:xi-zeta-I-final}, should lead to optimal information gain by achieving the bounds in Eqs.~\mref{eq:boundG_xy,eq:boundG_uv}, which in turn lead to optimal mutual information by achieving the bounds in Eqs.~\mref{eq:boundMI_xy,eq:boundMI_uv}. However, for completeness, we show that such a pair satisfies the necessary and sufficient conditions given by Proposition~\ref{condns:optG} and therefore leads to optimality. Following this process, as an indicator for optimality, we establish in Lemma~\ref{rel:epsa_l:egnval} an additional result regarding the sign of $\gamma_{\l}$. 
	
Proof of the following theorem works on the unique expression $\matd\vegn$ of optimal interaction vectors rather than working on its representative pairs $\left(\matd\rot\vcan, ~\rot\vcan\right)$.
The initial task is to find an expression for $\ket{\xi_u},\ket{\xi_v},\ket{\zeta_u},\ket{\zeta_v}$ corresponding to Eq.~\eqref{eq:xi-zeta-I-final}. For this, we use Eqs.~\mref{eq:xi-zeta-I-final:D,eq:basis_E-tilde} in Eqs.~\mref{eq:xiuxy,eq:xivxy}, to derive the following intermediate result.
\begin{lemma} 
For achieving the optimal information gain, we must have
	\begin{eqnarray} \label{eq:xi-zeta-u-final:D}
	\ket{\xi_u} &=& \sqrt{1-D_{xy}}~\tE{0}+\sqrt{D_{xy}}~\tE{2}\nonumber\\
	\ket{\xi_v} &=& \sqrt{1-D_{xy}}~\tE{0}-\sqrt{D_{xy}}~\tE{2} \nonumber\\
	\ket{\zeta_u} &=& \sqrt{1-D_{xy}}~\tE{1}-\sqrt{D_{xy}}~\tE{3} \nonumber\\
	\ket{\zeta_v} &=& \sqrt{1-D_{xy}}~\tE{1}+\sqrt{D_{xy}}~\tE{3}  
	\end{eqnarray}
	where the basis $\{\tE{\l}\}$ is as described in Eq.~\eqref{eq:basis_E-tilde}.
\end{lemma}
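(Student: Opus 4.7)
The plan is to obtain Eq.~\eqref{eq:xi-zeta-u-final:D} by direct substitution of the optimal form~\eqref{eq:xi-zeta-I-final:D} of $\ket{\xi_x}, \ket{\xi_y}, \ket{\zeta_x}, \ket{\zeta_y}$ into the basis-change relations~\eqref{eq:xiuxy} and~\eqref{eq:xivxy}. The key observation is that, in the optimal eigenbasis, the right-hand sides of~\eqref{eq:xiuxy} and~\eqref{eq:xivxy} involve only the \emph{sums} $\ket{\xi_x}+\ket{\xi_y}$, $\ket{\zeta_x}+\ket{\zeta_y}$ and the \emph{differences} $\ket{\xi_x}-\ket{\xi_y}$, $\ket{\zeta_y}-\ket{\zeta_x}$, all four of which collapse to a single $\tE{\l}$ thanks to the structure of~\eqref{eq:xi-zeta-I-final:D}.

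Concretely, using~\eqref{eq:xi-zeta-I-final:D} I would first record the four identities
\begin{align*}
\ket{\xi_x}+\ket{\xi_y} &= 2\sqrt{1-D_{uv}}\,\tE{0},\\
\ket{\xi_x}-\ket{\xi_y} &= 2\sqrt{D_{uv}}\,\tE{1},\\
\ket{\zeta_x}+\ket{\zeta_y} &= 2\sqrt{1-D_{uv}}\,\tE{2},\\
\ket{\zeta_y}-\ket{\zeta_x} &= -2\sqrt{D_{uv}}\,\tE{3}.
\end{align*}
Plugging the first and third into the first line of~\eqref{eq:xiuxy} gives $2\sqrt{1-D_{uv}}\ket{\xi_u} = 2\sqrt{1-D_{uv}}\bigl[\sqrt{1-D_{xy}}\,\tE{0}+\sqrt{D_{xy}}\,\tE{2}\bigr]$, so the prefactor $2\sqrt{1-D_{uv}}$ cancels and yields the stated form of $\ket{\xi_u}$. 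Plugging the second and fourth into the second line of~\eqref{eq:xiuxy} gives $2\sqrt{D_{uv}}\ket{\zeta_u} = 2\sqrt{D_{uv}}\bigl[\sqrt{1-D_{xy}}\,\tE{1}-\sqrt{D_{xy}}\,\tE{3}\bigr]$, and the prefactor cancels to give the stated form of $\ket{\zeta_u}$. The expressions for $\ket{\xi_v}$ and $\ket{\zeta_v}$ follow identically from~\eqref{eq:xivxy}, the only difference being a single sign flip in front of the $\sqrt{D_{xy}}$ terms because the relative sign between the two summands in~\eqref{eq:xivxy} is opposite to that in~\eqref{eq:xiuxy}.

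There is no real obstacle here beyond book-keeping; the only place where care is needed is that the last relation involves $\ket{\zeta_y}-\ket{\zeta_x}$ rather than $\ket{\zeta_x}-\ket{\zeta_y}$, so the sign on $\tE{3}$ must be tracked carefully, and this is what is responsible for the asymmetric appearance of the $-\sqrt{D_{xy}}\tE{3}$ in $\ket{\zeta_u}$ versus $+\sqrt{D_{xy}}\tE{3}$ in $\ket{\zeta_v}$. Once this is handled, the factors $2\sqrt{1-D_{uv}}$ and $2\sqrt{D_{uv}}$ on the two sides cancel cleanly, giving exactly the four equalities in~\eqref{eq:xi-zeta-u-final:D}.
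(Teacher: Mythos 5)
Your proposal is correct and is exactly the route the paper indicates: substitute the optimal form~\eqref{eq:xi-zeta-I-final:D} into the conjugate-basis relations~\eqref{eq:xiuxy} and~\eqref{eq:xivxy}, observe that the sums and differences of the $\ket{\xi_i}$, $\ket{\zeta_j}$ each collapse to a single $\tE{\l}$, and cancel the common prefactors. The sign bookkeeping you flag for $\ket{\zeta_y}-\ket{\zeta_x}$ is indeed the only subtlety, and you have handled it correctly.
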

\begin{remark}
To get expressions of $\ket{\xi_i},\ket{\zeta_j}$ in $\uv$ basis symmetric to those in $\xy$ basis, e.g., like \cite[Eq.~(52)]{fuchs97}, one must consider the canonical basis in the order $\ket{\E_0}=\txx, \ket{\E_1}=\tyy, \ket{\E_2}=\txy, \ket{\E_3}=\tyx$, compatible with \cite{fuchs97}.
\end{remark}

\begin{figure*}[htbp]
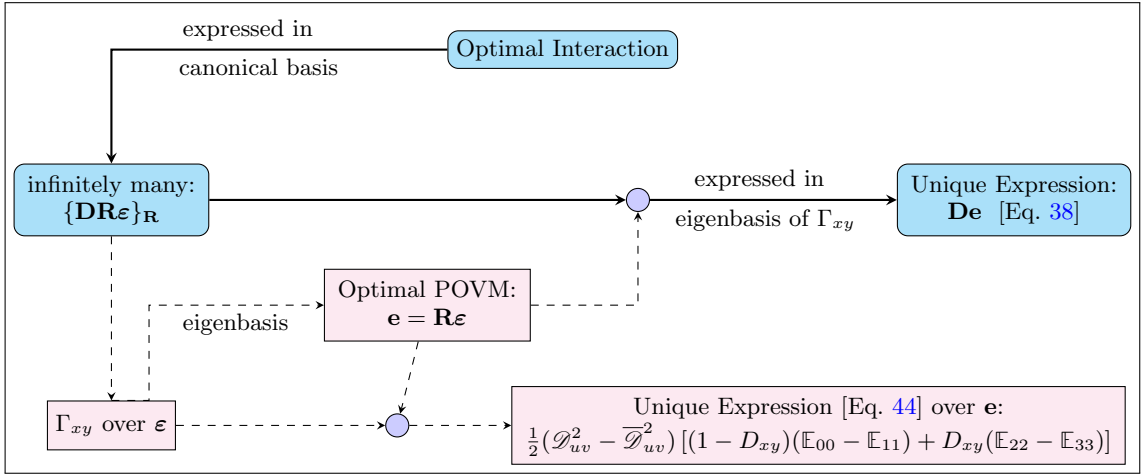

	\caption{Optimal interaction: infinitely many in canonical basis, while unique in eigenbasis} \label{fig:many2unq}
	\centering \boxed{\tikzGist}
\end{figure*}

\begin{theorem} The interaction given by Eqs.~\mref{eq:xi-zeta-I-final,eq:entangledExprXY} and a POVM corresponding to the eigenbasis given by Eq.~\eqref{eq:optimal_POVM-vec} satisfy the necessary and sufficient conditions given by Proposition~\ref{condns:optG} and therefore attain optimal information gain and optimal mutual information.
\end{theorem}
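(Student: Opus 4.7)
My plan is to reduce the proof to a direct verification of Eqs.~\mref{eq:condn_optG_1,eq:condn_optG_2,eq:sign:el}. Proposition~\ref{condns:optG} tells us these are necessary and sufficient for equality in Eq.~\eqref{eq:boundG_xy}, and the remark following that proposition extends the equivalence to the mutual-information bound in Eq.~\eqref{eq:boundMI_xy}. Hence once the conditions are checked, both $G_{xy}^{\star}$ and $I_{xy}^{\star}$ follow simultaneously, and the companion verification for signals prepared in the $\uv$ basis is obtained by the obvious symmetry between Eqs.~\mref{eq:xi-zeta-I-final,eq:xi-zeta-u-final:D}.

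First I would compute the auxiliary vectors of Eq.~\eqref{eq:Ulu} explicitly. Substituting Eq.~\eqref{eq:xi-zeta-u-final:D} into the Schmidt decomposition Eq.~\eqref{eq:entangledExprUV}, and using that every POVM element takes the rank-one form $E_\l = \ketbra{E_\l}{E_\l}$ (so $\sqrt{E_\l}=E_\l$), one finds that $\ket{U_{\l u}}$ is a scalar multiple of $\ket{u}\ket{E_\l}$ with coefficient $\sqrt{1-D_{uv}}\,\braket{E_\l}{\xi_u}$, while $\ket{V_{\l u}}$ is the same vector with coefficient $\sqrt{D_{uv}}\,\braket{E_\l}{\zeta_v}$; completely analogous expressions, with $\ket{u}$ replaced by $\ket{v}$ and the roles of $\xi,\zeta$ swapped, hold for $\ket{U_{\l v}}$ and $\ket{V_{\l v}}$. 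Expanding $\ket{\xi_u},\ket{\xi_v},\ket{\zeta_u},\ket{\zeta_v}$ in the eigenbasis via Eq.~\eqref{eq:basis_E-tilde} then shows that the inner products $\braket{E_\l}{\xi_u}$ and $\braket{E_\l}{\zeta_v}$ (and similarly for the $v$-projection pair) agree in absolute value and differ only by a sign $\varepsilon_\l\in\{+1,-1\}$ following the pattern $(+,-,+,-)$ for $\l=0,1,2,3$. The required factor $\sqrt{D_{uv}/(1-D_{uv})}$ then emerges automatically from the ratio of coefficients, yielding Eqs.~\mref{eq:condn_optG_1,eq:condn_optG_2}.

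It remains to confirm Eq.~\eqref{eq:sign:el}, that this extracted $\varepsilon_\l$ matches $\sgn(Q_{x\l}-Q_{y\l})$. For equal priors this reduces to $\sgn(P_{\l x}-P_{\l y})$, which one reads off directly from Table~\ref{tab:Gl} with $\a=\m=\Dp_{uv}^2$ and $\b=\n=\Dm_{uv}^2$: in the standard regime $D_{uv}<\tfrac{1}{2}$ the signs alternate as $(+,-,+,-)$, matching the pattern derived in the previous step. The mirror conditions for the $\uv$-basis bound follow from the identical computation with $\ket{X},\ket{Y}$ replacing $\ket{U},\ket{V}$ and Eq.~\eqref{eq:xi-zeta-I-final} replacing Eq.~\eqref{eq:xi-zeta-u-final:D}. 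Invoking the remark after Proposition~\ref{condns:optG} then upgrades optimality of $G$ to optimality of $I$, completing the argument. The only delicate bookkeeping is tracking the signs through the tilde-basis transformation of Eq.~\eqref{eq:basis_E-tilde}; every other step is rotation-invariant, since orthogonality of $\rot$ preserves inner products, so no separate verification is needed for each specific realization $\vegn=\rot\vcan$ of the optimal POVM.
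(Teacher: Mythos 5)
Your proof follows essentially the same route as the paper's: expand $\ket{U_{\l u}},\ket{V_{\l u}},\ket{U_{\l v}},\ket{V_{\l v}}$ via Eqs.~\mref{eq:entangledExprUV,eq:xi-zeta-u-final:D}, read off the coefficients $\braket{E_\l}{\xi_u}$, $\braket{E_\l}{\zeta_v}$, and observe that their ratio produces the required factor $\sqrt{D_{uv}/(1-D_{uv})}$ with the alternating sign pattern $(+,-,+,-)$. The only departure is that you fold the verification of Eq.~\eqref{eq:sign:el} directly into the argument by reading $\sgn(P_{\l x}-P_{\l y})$ off Table~\ref{tab:Gl}, whereas the paper defers that check to Lemma~\ref{rel:epsa_l:egnval} and the remark following it; note also that your restriction to $D_{uv}<\tfrac12$ is unnecessary, since $\Dp_{uv}^2-\Dm_{uv}^2=2\concave{D_{uv}}>0$ for any $0<D_{uv}<1$.
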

\begin{proof} From Eq.~\eqref{eq:Ulu}, we have
	\begin{eqnarray*}
		\ket{U_{\l u}} &=& B_{u}\tensorprod\sqrt{E_\l}~ \ket{U} = B_{u}\tensorprod{E_\l}~ \ket{U} \\
		\ &=& \sqrt{1-D_{uv}}~\left(B_{u}\ket{u}\right)\tensorprod\left(E_\l\ket{\xi_{u}}\right) \\
		\ && + \sqrt{D_{uv}}~\left(B_{u}\ket{v}\right)\tensorprod\left(E_\l\ket{\zeta_{v}}\right), \text{by Eq.~}\eqref{eq:entangledExprUV}. 
	\end{eqnarray*}
	Since $B_{u}\ket{u} = \ket{u}, B_{u}\ket{v} = \Ovec$, and $E_\l\ket{\xi_{u}} = \braket{E_\l}{\xi_u}\ket{E_\l}$, we get,
	\begin{eqnarray*}
		\ket{U_{\l u}} &=&  \sqrt{1-D_{uv}}~\braket{E_\l}{\xi_u}\ket{u}\ket{E_\l}.
	\end{eqnarray*}
	Similarly,
	\begin{eqnarray*}
		\ket{V_{\l u}} &=& 
		\sqrt{D_{uv}}~\braket{E_\l}{\zeta_v}\ket{u}\ket{E_\l}.
	\end{eqnarray*}
	Here, we want equality in magnitude between $\braket{E_\l}{\xi_u}$ and $\braket{E_\l}{\zeta_v}$. 	
	Now,  by Eq.~\eqref{eq:xi-zeta-u-final:D}, $\braket{E_\l}{\xi_u}$ takes values $$\invsqrttwo\sqrt{1-D_{xy}},\invsqrttwo\sqrt{1-D_{xy}},\invsqrttwo\sqrt{D_{xy}},\invsqrttwo\sqrt{D_{xy}}~;$$ whereas, $\braket{E_\l}{\zeta_v}$ takes values  $$\invsqrttwo\sqrt{1-D_{xy}},-\invsqrttwo\sqrt{1-D_{xy}},\invsqrttwo\sqrt{D_{xy}},-\invsqrttwo\sqrt{D_{xy}}~,$$ respectively for $\l=0,1,2,3$.
	Therefore, 
	\begin{eqnarray*} 
		\ket{V_{\l u}} &=& \varepsilon_\l~\factor{uv}~ ~\ket{U_{\l u}},
	\end{eqnarray*}
	where,
	\begin{equation}
	\label{eq:epsa_val}
	\varepsilon_0 = +1, \indent \varepsilon_1 = -1, \indent \varepsilon_2 = +1, \indent \varepsilon_3 = -1. 
	\end{equation}
	Similarly, one may calculate to verify that 
	\begin{eqnarray*} 
		\ket{U_{\l v}} &=& \varepsilon_\l~\factor{uv}~ ~\ket{V_{\l v}},
	\end{eqnarray*}
	for the same combination of $\varepsilon_\l$ as in Eq.~\ref{eq:epsa_val}.
	This completes the proof of the theorem.  
\end{proof}

Further, we take the opportunity to establish a direct relation between the sign parameter $\varepsilon_\l$ and the signs of eigenvalues $\gamma_\l$. 
\begin{lemma} \label{rel:epsa_l:egnval} For optimal $G$,
	\begin{equation} \varepsilon_\l = \sgn~\gamma_\l. \end{equation}
\end{lemma}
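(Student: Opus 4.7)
The plan is to reduce the sign $\varepsilon_\l$ (originally a posterior-probability comparison) to the sign of a diagonal entry of $\Gamma_{xy}$ in the optimal eigenbasis, which is exactly $\gamma_\l$. First I would unfold the definition~\eqref{eq:sign:el}: for equal priors $p_x=p_y=\tfrac12$, the denominator $q_{xy}(\l)=\tfrac12(P_{\l x}+P_{\l y})$ is positive, so
\begin{equation*}
\varepsilon_\l \;=\; \sgn\bigl(Q_{x\l}-Q_{y\l}\bigr) \;=\; \sgn\bigl(P_{\l x}-P_{\l y}\bigr).
\end{equation*}
Using~\eqref{eq:PlxTr},~\eqref{eq:PlyTr} together with the equal-prior definition~\eqref{eq:gamma_eq}, we have $P_{\l x}-P_{\l y} = \tr\bigl((\rho_x-\rho_y)E_\l\bigr) = 2\,\tr\bigl(\Gamma_{xy} E_\l\bigr)$, so the problem collapses to computing $\sgn\,\tr(\Gamma_{xy} E_\l)$.

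Next I would invoke the structure of the optimal POVM. By Lemma~\ref{th:optPOVM}, an optimal POVM consists of rank-one projectors $E_\l=\ketbra{E_\l}{E_\l}$ onto an orthonormal eigenbasis of $\Gamma_{xy}$; for the interactions~\eqref{eq:xi-zeta-I-final}, these are exactly the bases $\{\ket{E_\l}\}$ identified in Section~\ref{optPOVM:form} via $\vegn=\rot\vcan$, with eigenvalues~\eqref{eq:Eigenval}. Consequently $\Gamma_{xy}\ket{E_\l}=\gamma_\l\ket{E_\l}$ gives
\begin{equation*}
\tr\bigl(\Gamma_{xy} E_\l\bigr) \;=\; \braopket{E_\l}{\Gamma_{xy}}{E_\l} \;=\; \gamma_\l,
\end{equation*}
and combining with the previous step yields $\varepsilon_\l = \sgn\,\gamma_\l$, which is the claim.

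As a sanity check, I would match this against the explicit values~\eqref{eq:epsa_val} derived in the preceding theorem: $\varepsilon_0=\varepsilon_2=+1$ aligns with $\gamma_0,\gamma_2>0$ (since $\Dp_{uv}^2-\Dm_{uv}^2=2\concave{D_{uv}}\ge0$ and $0\le D_{xy}\le1$), while $\varepsilon_1=\varepsilon_3=-1$ aligns with $\gamma_1=-\gamma_0$ and $\gamma_3=-\gamma_2$. No step is really an obstacle here; the only subtlety I would flag is that the identity $\tr(\Gamma_{xy}E_\l)=\gamma_\l$ requires $E_\l$ to be a rank-one eigenprojector onto the eigenvector $\ket{E_\l}$ with eigenvalue exactly $\gamma_\l$, and this must be read off the correspondence established in Section~\ref{optPOVM:form} between the rotated basis $\rot\vcan$ and the labeling of eigenvalues in~\eqref{eq:Eigenval}. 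Once that labeling is fixed, the proof is a two-line calculation.
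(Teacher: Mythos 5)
Your argument is correct and matches the paper's proof essentially line for line: the paper also writes $\gamma_\l=\tr(\Gamma_{xy}E_\l)=\tfrac12(P_{\l x}-P_{\l y})$ using the diagonality of $\Gamma_{xy}$ in the eigenbasis, and then reads off $\varepsilon_\l=\sgn(Q_{x\l}-Q_{y\l})=\sgn(P_{\l x}-P_{\l y})=\sgn\,\gamma_\l$. Your added sanity check against Eq.~\eqref{eq:epsa_val} and Eq.~\eqref{eq:Eigenval} is a nice extra but the core reasoning is identical.
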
    
\begin{proof} For optimal $G$, $\Gamma$ is a diagonal matrix with diagonal entries $\gamma_\l$. Thus, for signals sent in $\xy$ basis,
	\begin{eqnarray*} 
		\gamma_\l \!=\! \tr(\Gamma_{xy} E_\l) \!=\! \half[\tr(\rho_xE_\l) \!-\! \tr(\rho_yE_\l)] \!=\! \half(P_{\l x} \!-\! P_{\l y}). 
	\end{eqnarray*}
	By Eq.~\eqref{eq:sign:el},
	\begin{eqnarray*} 
		\varepsilon_\l &=& \sgn\left(Q_{x\l} - Q_{y\l}\right) = \sgn\left(P_{\l x} - P_{\l y}\right) = \sgn~\gamma_\l,
	\end{eqnarray*}
	which establishes the relation.
\end{proof}
\begin{remark}
	By Lemma~\ref{rel:epsa_l:egnval}, another indication for optimality is that Eq.~\eqref{eq:epsa_val} should match with the signs of the eigenvalues $\gamma_\l$ of $\Gamma_{xy}$ as in Eq.~\eqref{eq:Eigenval}, which indeed happens here. Therefore, for any rotation of the canonical basis, a pair $\left(\matd\rot\vcan, ~\rot\vcan\right)$ of optimal interaction and corresponding optimal POVM in Eqs.~\eqref{eq:xi-zeta-I-final},~\eqref{eq:optimal_POVM-vec} achieves optimality.
\end{remark}

Here we summarize the results achieved so far. While Eq.~\eqref{eq:xi-zeta-I-final} captures a class of optimal interactions in a unique form, Eq.~\eqref{eq:optimal_POVM-vec} depicts the class of optimal POVMs for those interactions. Combining Eqs.~\mref{eq:xi-zeta-I-final,eq:optimal_POVM-vec} we get the whole class of optimal interactions expressed in canonical basis. Fixing a rotation matrix $\rot$ then produces a particular instance of an optimal interaction, while varying $\rot$ produces the whole class of optimal interactions. Although there are infinitely many optimal interactions while expressed in canonical basis, they all have a unique form~\eqref{eq:xi-zeta-I-final} while written in eigenbasis of $\Gamma_{xy}$. Figure~\ref{fig:many2unq} illustrates this fact. Each round node in the figure denotes that the two input results together derive the output result.
\section{A Discussion on Connection between Our Results and those of Fuchs \etal. \cite{fuchs97}}
\label{connectfuchs}
Here we show that the instances of an optimal interaction presented in \cite{fuchs97} is a special case of the generalized unique form of the optimal interaction that we have derived. Moreover, we generate a new instance (different from the two instances of~\cite{fuchs97}) of the optimal interaction to add more clarity to our achievement.\smallskip 

\renewcommand{\arraystretch}{1.5}
\begin{sidewaystable}[htbp]
	\centering
	{\scriptsize
		\begin{tabular}{|p{1.4 cm}||p{2.9 cm}|p{2.8 cm}|p{3.7 cm}|p{7.7 cm}|p{3.7 cm}|}
			\hline \hline
			Optimal Interaction & Ours: General [Eq.~\ref{eq:xi-zeta-I-final}] & Fuchs 1 [Eq.~\ref{eq:opt_xi-zeta}] & Fuchs 2 [Eq.~\ref{eq:choice:xi_zeta:opt}] & Ours: One Parameter [Eq.~\ref{eq:xi-zeta-opt-gen}] & Ours: Special Case [Eq.~\ref{eq:xi-zeta-opt-gen-example1}]   \\
			\hline \hline				
			$\ket{\xi_x}$	& $\intDA$ & $\intAA$ & $\intBA$  & $\intEA$ & $\intCA$ \\
			$\ket{\xi_y}$	& $\intDB$ & $\intAB$ & $\intBB$  & $\intEB$ & $\intCB$ \\
			$\ket{\zeta_x}$ & $\intDC$ & $\intAC$ & $\intBC$  & $\intEC$ & $\intCC$ \\
			$\ket{\zeta_y}$ & $\intDD$ & $\intAD$ & $\intBD$  & $\intED$ & $\intCD$ \\    
			\hline \hline
			Optimal POVM & Ours: General & Fuchs 1 [Eq.~\ref{eq:optEl}] & Fuchs 2 [Eq.~\ref{eq:optimal_POVM-vec:eqErr}] & Ours: One Parameter [Eq.~\ref{eq:optimal_POVM-vec_exmpl}] & Ours: Special Case [Eq.~\ref{eq:optimal_POVM-vec-example1}]  \\
			\hline \hline			
			$\ket{E_0}$   & $\ket{E_0}$ & $\povmAA$ & $\povmBA$   & $\povmEA$ & $\povmCA$ \\
			$\ket{E_1}$   & $\ket{E_1}$ & $\povmAB$ & $\povmBB$   & $\povmEB$ & $\povmCB$ \\
			$\ket{E_2}$   & $\ket{E_2}$ & $\povmAC$ & $\povmBC$   & $\povmEC$ & $\povmCC$ \\
			$\ket{E_3}$   & $\ket{E_3}$ & $\povmAD$ & $\povmBD$   & $\povmED$ & $\povmCD$ \\ 
			\hline 
			\hline 
		\end{tabular}
	}
	\caption{Optimal interaction and corresponding optimal POVM - unique general form and its specific instances.}
	\label{tab:optintall}
\end{sidewaystable}

As discussed earlier, Eqs.~\eqref{eq:xi-zeta-I-final} and~\eqref{eq:optimal_POVM-vec} are key ingredients to generate different instances of the pair $\left(\matd\rot\vcan, ~\rot\vcan\right)$ of optimal interaction and corresponding optimal POVM by varying rotation $\rot$ of the canonical basis. 
For a special type of the orthogonal matrix $\rot$ as given by Example~\ref{ex1}, we combine these results and write an optimal interaction in terms of the canonical basis as below:
\begin{eqnarray} \label{eq:xi-zeta-opt-gen}
\ket{\xi_x} &=& \left(\Dp_{uv}\sqrt{a}+\Dm_{uv}\sqrt{1-a}\right)\ket{\E_0}, \nonumber\\
\ && + \left(\Dm_{uv}\sqrt{a}-\Dp_{uv}\sqrt{1-a}\right)\ket{\E_1}, \nonumber\\
\ket{\xi_y} &=& \left(\Dm_{uv}\sqrt{a}+\Dp_{uv}\sqrt{1-a}\right)\ket{\E_0} \nonumber\\
\ && + \left(\Dp_{uv}\sqrt{a}-\Dm_{uv}\sqrt{1-a}\right)\ket{\E_1}, \nonumber\\
\ket{\zeta_x} &=& \left(\Dp_{uv}\sqrt{a}+\Dm_{uv}\sqrt{1-a}\right)\ket{\E_2} \nonumber\\
\ && + \left(\Dm_{uv}\sqrt{a}-\Dp_{uv}\sqrt{1-a}\right)\ket{\E_3}, \nonumber\\
\ket{\zeta_y} &=& \left(\Dm_{uv}\sqrt{a}+\Dp_{uv}\sqrt{1-a}\right)\ket{\E_2} \nonumber\\
\ && + \left(\Dp_{uv}\sqrt{a}-\Dm_{uv}\sqrt{1-a}\right)\ket{\E_3}.  
\end{eqnarray}

\begin{figure*}[htbp]
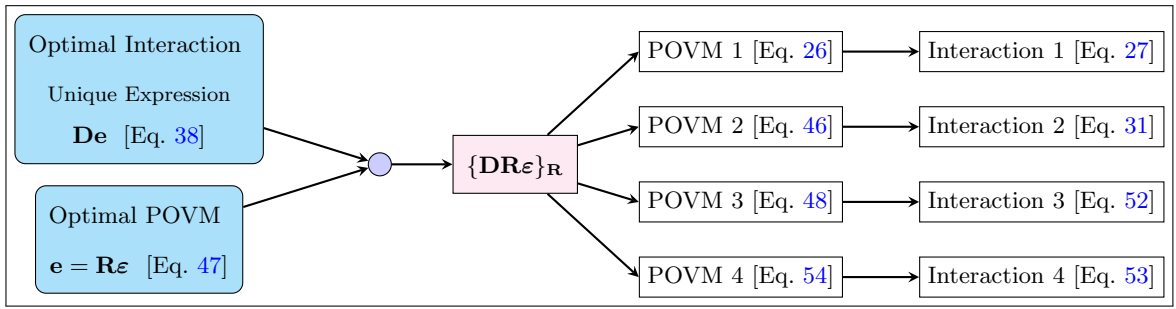

	\caption{Optimal interaction: unique expression to specific instantiations} \label{fig:unq2many}
	\centering \boxed{\tikzUnq}
\end{figure*}

For unequal error rates, Eq.~\eqref{eq:opt_xi-zeta} is a special case (apart from a permutation of the canonical basis) with $a=1$ in Eq.~\eqref{eq:xi-zeta-opt-gen}. Similarly, for equal error rates($D_{xy}=D_{uv}=D$), Eq.~\eqref{eq:choice:xi_zeta:opt} is a special case with $a=\Dp^2$ in Eq.~\eqref{eq:xi-zeta-opt-gen}. One may consider innumerable such optimal interactions (and corresponding optimal POVMs) by tuning the rotation parameter $a$ in the range $[0, 1]$. One such example is given below for unequal error rates.
\begin{example}
	Let $a=\half$. Thus the optimal interaction in Eq.~\eqref{eq:xi-zeta-opt-gen} becomes
	\begin{eqnarray} \label{eq:xi-zeta-opt-gen-example1}
	\ket{\xi_x} &=& \sqrt{1-D_{uv}}~\ket{\E_0} - \sqrt{D_{uv}}~\ket{\E_1} \nonumber,\\
	\ket{\xi_y} &=& \sqrt{1-D_{uv}}~\ket{\E_0} + \sqrt{D_{uv}}~\ket{\E_1} \nonumber,\\
	\ket{\zeta_x} &=& \sqrt{1-D_{uv}}~\ket{\E_2} - \sqrt{D_{uv}}~\ket{\E_3} \nonumber,\\
	\ket{\zeta_y} &=& \sqrt{1-D_{uv}}~\ket{\E_2} + \sqrt{D_{uv}}~\ket{\E_3}, 
	\end{eqnarray}
	and the corresponding optimal POVM is captured by
	\begin{eqnarray} \label{eq:optimal_POVM-vec-example1}
		\ket{E_0} = \invsqrttwo\left(\ket{\E_0}-\ket{\E_1}\right), & \ & 
		\ket{E_1} = \invsqrttwo\left(\ket{\E_0}+\ket{\E_1}\right), \nonumber\\
		\ket{E_2} = \invsqrttwo\left(\ket{\E_2}-\ket{\E_3}\right), & \ & 
		\ket{E_3} = \invsqrttwo\left(\ket{\E_2}+\ket{\E_3}\right). \nonumber \\
	\end{eqnarray}
\end{example}

One may easily check that the interaction presented here is indeed optimal. 
Clearly, the general form of the optimal interaction provided in this paper yields different choices of those in \cite{fuchs97}. Moreover, it's implementation is independent of equal or unequal error rates.

At this stage, we weigh the results achieved by us. Fuchs \etal.~\cite{fuchs97} came up with two different configurations for optimal interactions expressed in canonical basis. For the first configuration (Eq.~\ref{eq:optimal_xi-zeta}), they described the corresponding POVM (Eq.~\ref{eq:optEl}) w.r.t. the canonical basis, while for their second configuration (Eq.~\ref{eq:choice:xi_zeta:opt}), we have deduced the corresponding POVM (Eq.~\ref{eq:optimal_POVM-vec:eqErr}) in terms of the canonical basis. We have presented one more instance of an optimal interaction (Eq.~\ref{eq:xi-zeta-opt-gen-example1}) and the corresponding POVM (Eq.~\ref{eq:optimal_POVM-vec-example1}) w.r.t. the canonical basis. 
Table~\ref{tab:optintall} describes the general form of the optimal interaction and also shows its four specific instantiations, of which the first two coincide with those of Fuchs et al.~\cite{fuchs97} and the later two with our examples discussed earlier, the corresponding POVMs are also captured there.

For each of these three instances of the optimal interaction, one may use the relation between the eigenbasis and the canonical basis (looking at the POVM) to express the interaction w.r.t. the eigenbasis and notice that the final form becomes the same (Eq.~\ref{eq:xi-zeta-I-final}) for all these cases. It turns out that every possible instances of an optimal interaction written w.r.t. the canonical basis can be transformed to a unique description (Eq.~\ref{eq:xi-zeta-I-final}) in terms of the eigenbasis via the corresponding POVM. This is the significance of our work. We could establish that there exists infinitely many possible instances of an optimal interaction represented in a canonical basis, but they all have a unique representation while expressed in the eigenbasis. Feeding an optimal POVM to the unique form of the optimal interaction produces a specific instance of an optimal interaction. This is depicted in Figure~\ref{fig:unq2many}. 
Since an optimal interaction has an unique form and the form in Fuchs \etal.~\cite{fuchs97} is a special case, any instance of such an optimal interaction will achieve the same optimal information gain $G^{\star}$ benchmarked in~\cite{fuchs97}, neither more nor less. \medskip
  
\section{Conclusion}
For the BB84 quantum key distribution protocol, we have established a unique form describing the optimal interaction followed by the class of corresponding optimal measurements for the optimal information gain an eavesdropper can obtain for a given average disturbance when her interaction and measurements are performed signal by signal. We have shown that the choice of optimal interaction in~\cite{fuchs97}, for equal as well as unequal error rates, is a special case of the form provided by us.

\end{document}